\newenvironment{myproof}[1]{\smallskip\noindent{\sc Proof #1.}}%
        {\hspace*{\fill}$\Box$\par}
\newlist{ass}{enumerate}{1}
\setlist[ass]{label = {\bf (A\arabic*)}, resume}
\newlist{assFP}{enumerate}{1}
\setlist[assFP]{label = {\bf (A\arabic*-P)}, resume}
\newlist{assBV}{enumerate}{1}
\setlist[assBV]{label = {\bf (A\arabic*-BV)}, resume}
\newlist{hyp}{enumerate}{1}
\setlist[hyp]{label = {\bf (H\arabic*)}, resume}
\newlist{obj}{enumerate}{1}
\setlist[obj]{label = {\bf (O\arabic*)}, resume}
\newlist{lyap}{enumerate}{1}
\setlist[lyap]{label = {\bf (L\arabic*)}, resume}
\theoremstyle{plain}
\newtheorem{lem}{Lemma}
\newtheorem{prop}{Proposition}
\newtheorem{thm}{Theorem}
\theoremstyle{definition}
\newtheorem{defn}{Definition}
\theoremstyle{remark}
\newtheorem{rem}{Remark}
\newcommand{\me}{\mathtt{e}}
\newcommand{\R}{\mathbb{R}}
\newcommand{\cA}{\mathcal{A}}
\newcommand{\cB}{\mathcal{B}}
\newcommand{\cC}{\mathcal{C}}
\newcommand{\cD}{\mathcal{D}}
\newcommand{\cF}{\mathcal{F}}
\newcommand{\cH}{\mathcal{H}}
\newcommand{\cJ}{\mathcal{J}}
\newcommand{\cK}{\mathcal{K}}
\newcommand{\cL}{\mathcal{L}}
\newcommand{\cQ}{\mathcal{Q}}
\newcommand{\cS}{\mathcal{S}}
\newcommand{\cT}{\mathcal{T}}
\newcommand{\cW}{\mathcal{W}}
\DeclareMathOperator{\dom}{dom}
\DeclareMathOperator{\col}{col}
\newcommand{\eps}{\varepsilon}
\newcommand{\diag}{\text{\normalfont diag}}
\newcommand{\innProd}[2]{\left\langle #1, #2 \right\rangle}
\DeclarePairedDelimiter\floor{\lfloor}{\rfloor}
\newcommand{\footremember}[2]{%
   \footnote{#2}
    \newcounter{#1}
    \setcounter{#1}{\value{footnote}}%
}
\newcommand{\footrecall}[1]{%
    \footnotemark[\value{#1}]%
} 
\title
{Disturbance-to-State Stabilization and Quantized Control for Linear Hyperbolic Systems}
\author{%
    Aneel Tanwani\footremember{corr}{Corresponding author's email: {\tt aneel.tanwani@laas.fr}}\footremember{laas}{LAAS--CNRS, Universit\'e de Toulouse, CNRS, 7 Avenue du Colonel Roche, 31400 Toulouse, France.    
}%
    \and Christophe Prieur\footremember{gipsa}{Gipsa Lab--CNRS, University of Grenoble Alpes, CNRS, 11 Rue des Math\'ematiques, BP 46, 38402 Saint Martin d'H\`eres, France.
}%
  \and Sophie Tarbouriech\footrecall{laas}
}
\date{}
\begin{document}
\maketitle

\begin{abstract}
We consider a system of linear hyperbolic PDEs where the state at one of the boundary points is controlled using the measurements of another boundary point.
Because of the disturbances in the measurement, the problem of designing dynamic controllers is considered so that the closed-loop system is robust with respect to measurement errors. 
Assuming that the disturbance is a locally essentially bounded measurable function of time, we derive a disturbance-to-state estimate which provides an upper bound on the maximum norm of the state (with respect to the spatial variable) at each time in terms of $\cL^\infty$-norm of the disturbance up to that time.
The analysis is based on constructing a Lyapunov function for the closed-loop system, which leads to controller synthesis and the conditions on system dynamics required for stability.
As an application of this stability notion, the problem of quantized control for hyperbolic PDEs is considered where the measurements sent to the controller are communicated using a quantizer of finite length.
The presence of quantizer yields practical stability only, and the ultimate bounds on the norm of the state trajectory are also derived.
\end{abstract}

\section{Introduction}
Partial differential equations (PDEs), or distributed parameter systems, have appeared as a tool for modeling several complex physical phenomena, and there is now a considerable literature on analysis and simulation of such systems.
More recently, over the past decade, there has been a surge in control community for designing control algorithms for PDEs so that their behavior can be steered towards some desired performance level.
This has led the researchers to generalize several control-theoretic questions from the finite-dimensional systems in the context of infinite-dimensional systems.
In that spirit, this article formulates a robust stability notion when the measurements used for feedback control in hyperbolic PDEs are subjected to unknown disturbances.
In the literature on ordinary differential equations (ODEs), the property of {\em input-to-state stability} (ISS) induces this desired robust behavior while regarding the disturbances as exogenous inputs in the closed-loop system.
The Lyapunov function based techniques available for verifying ISS are thus generalized in the context of hyperbolic PDEs in this article.

Hyperbolic PDEs represent a class of such infinite dimensional systems, which have been used in modeling physical system such as shallow water equations, and also to model time-delays in engineering systems. Several results on analysis, Lyapunov stability, and feedback control design of hyperbolic systems have been published in the recent past, see the book \cite{BastCoro16} for an overview of results.

\subsection{System Class}
We consider the feedback control for the class of linear hyperbolic PDEs described by the equation
\begin{subequations}\label{eq:sysHyp}
\begin{equation}\label{eq:sysHyp:dyna}
\frac{\partial X}{\partial t} (z,t) +\Lambda \frac{\partial X}{\partial z} (z,t) = 0
\end{equation}
where $z \in [0,1]$ is the spatial variable, and $t \in \R_+:= [0,\infty)$ is the time variable. The matrix $\Lambda$ is assumed to be diagonal and positive definite. We call $X:[0,1] \times \R_+ \rightarrow \R^n$ the state trajectory.
The initial condition is defined as
\begin{equation}
X(z,0) = X^0(z), \quad z \in (0,1)
\end{equation}
\end{subequations}
for some function $X^0 : (0,1) \rightarrow \R^n$.
The value of the state $X$ is controlled at the boundary $z=0$ through some input $u: \R_+ \to \R^{m}$ so that
\begin{equation}\label{eq:initCond}
X (0,t) = H X(1,t) + B u(t)
\end{equation}
where $H \in \R^{n \times n}$ and $B \in \R^{n \times m}$ are constant matrices.
The system~\eqref{eq:sysHyp}-\eqref{eq:initCond} forms a class of 1-D boundary controlled hyperbolic PDEs, for which several fundamental results can be found in \cite{BastCoro16}.

We consider the case when only the measurement of the state $X$ at the boundary point $z=1$ is available for each $t \ge 0$, and this measurement is subjected to some bounded disturbance.
We thus denote the output of the system by
\begin{equation}\label{eq:defOut}
y(t) = X(1,t) + d(t)
\end{equation}
where the disturbance $d \in \cL^\infty([0,\infty),\R^n)$ may arise due to low resolution of the sensors, uncertain environmental factors, or errors in communication.

We are interested in designing a feedback control law $u$ as a function of the output measurement $y$, that is $u = \cF(y)$ for some operator $\cF$, which stabilizes the system in some appropriate sense, and the behavior of the closed-loop system is robust with respect to the measurement disturbances.
Here, we allow the possibility that $u$ may be obtained via a dynamic compensator so that $\cF$ is an operator with memory.
In particular, it is desired that the closed-loop trajectories satisfy the following disturbance-to-state stability (DSS) estimate:
\begin{equation}\label{eq:maxEstGen}
\max_{z\in [0,1]} \vert X(z,t) \vert \le c\, \me^{-at} M_{X^0} + \gamma \left( \|d_{[0,t]}\|_\infty \right)
\end{equation}
for some constants $a, c, M_{X^0} > 0$, and $\gamma$ a class $\cK_\infty$ function. Here, $\|d_{[0,t]}\|_\infty$ denotes the essential supremum of $\vert d(s) \vert$ for $s$ contained in $[0,t]$, and for given $z$ and $t$, $\vert X(z,t) \vert$ denotes the usual Euclidean norm of $X(z,t) \in \R^n$. The constant $M_{X^0}$ is such that it depends on some norm associated with the function $X^0$ and possibly the initial state chosen for the dynamic compensator $u$. The DSS property ensures that in the absence of disturbance, that is $d \equiv 0$, the maximum norm of $X$ (with respect to spatial variable) decreases exponentially in time with a uniform decay rate. In the presence of nonzero disturbances, that is $d \not\equiv 0$, the maximum value of $X$ over $[0,1]$, at each time $t \ge 0$, is bounded by the maximum norm of the disturbance over the interval $[0,t]$ and an exponentially decaying term due to the initial condition of the system. Due to the semigroup property, the conditions we impose on the system to obtain estimate \eqref{eq:maxEstGen}, also ensure that if $d(t) \to 0$, then $\max_{z\in [0,1]} \vert X(z,t) \vert$ also converges to zero with time, see Remark~\ref{rem:sgProp} and Section~\ref{sec:vanish}.

It turns out that the function (of initial state) $M_{X^0}$ that we compute to establish \eqref{eq:maxEstGen} is such that, even if $X^0 \equiv 0$, $M_{X^0}$ is not necessarily equal to zero. However, using \eqref{eq:maxEstGen}, we can obtain an alternate estimate of the form
\begin{equation}\label{eq:genIsps}
\max_{z\in [0,1]} \vert X(z,t) \vert \le c \, \me^{-at} \! \! \max_{z\in [0,1]} \vert X^0(z,t) \vert + \gamma  (\|d_{[0,t]}\|_\infty) + C \me^{-at}
\end{equation}
for some $C>0$. This estimate guarantees attractivity of the origin $X = 0$ in $\cC^0([0,1],\R^n)$, and only {\em practical} stability.

Drawing comparisons from the literature on stability of finite-dimensional systems, it is observed that the estimates of the form \eqref{eq:maxEstGen} and \eqref{eq:genIsps} have been studied under the notion of input-to-state stability (ISS), pioneered in \cite{Sontag89}, and more generally input-to-state practical stability (ISpS) \cite{JianTeel94}, respectively.
One of the most fundamental results in the ODEs literature, which makes the ISS property extremely useful for design problems, is that the ISS estimates can be equivalently characterized in terms of Lyapunov dissipation inequalities.
In our approach, we also propose a controller design which allows us to construct a Lyapunov function for the closed-loop such that the corresponding dissipation inequality is of the same form as in the finite-dimensional case. This proves to be sufficient for arriving at the estimate \eqref{eq:maxEstGen}.



\subsection{Motivation}\label{sec:mot}
The motivation for studying the DSS property comes from the application in quantized control.
When the measurement $X(1,t)$ can not be passed precisely to the controller, but has to be encoded using finitely many symbols, one can see $d$ in \eqref{eq:defOut} as the error between the actual value and the quantized value of the signal $X(1,t)$. The quantizers are typically designed to operate over a compact set in the output space.
Within this operating region, the quantization error remains constant and hence one expects the state trajectory to converge to a ball around the origin parameterized by the size of quantization error. Hence, to obtain this practical stability, the controller must ensure that the state trajectory remains within the compact set for which the quantizer is designed.

To implement this methodology in the context of PDEs under consideration, the problem is to find a controller which ensures the DSS estimate \eqref{eq:maxEstGen} holds and that the output $X(1,\cdot)$ remains within the range of the quantizer. The DSS estimate also ensures practical stability in this setup since the $X(z,\cdot)$ eventually converges to a ball around the origin whose radius is parameterized  by the sensitivity of the quantizer.

\subsection{Literature Overview}
In case there are no perturbations, that is, $d \equiv 0$, one typically chooses $u(t) = K y(t)$ such that the closed-loop boundary condition
\begin{equation}\label{eq:initCondLoop}
X(0,t) = (H+BK) X(1,t)
\end{equation}
satisfies a certain dissipative condition.
This control law yields asymptotic stability of the system with respect to $\cH^2$-norm \cite{CoronBastin08}, or $\cC^1$-norm \cite{CoronBastin15}, depending on the dissipativity criterion imposed on $H+BK$.
In the presence of perturbations $d \not \equiv 0$, one has to modify the stability criteria as the asymptotic stability of the origin can no longer be established.

One finds the Lyapunov stability criteria with $\cL^2$-norm and dissipative boundary conditions in \cite{BastinCoronAndreaIFAC08}. Lyapunov stability in $\cH^2$-norm for nonlinear systems is treated in \cite{CoronBastin08}. Thus, the construction of Lyapunov functions in $\cH^2$-norm for the hyperbolic PDEs with static control laws can be found in the literature.
Because our controller adds dynamics to the closed-loop, the basic idea behind the construction of Lyapunov function for the closed-loop system is to use the ISS property of the hyperbolic PDE and the controller dynamics.

In the literature, one finds various instances where the ISS related tools are used for stability analysis of interconnected systems.
In the paper \cite{ito2012capability}, an integral ISS Lyapunov function is computed for a networks described by a finite-dimensional nonlinear function.
Small gain theorem is crucial when interconnecting ISS systems as exploited in \cite{geiselhart2012numerical,dashkovskiy2007iss}.

For infinite dimensional systems, the problem of ISS has attracted attention recently but most of the existing works treat the problem with respect to uncertainties in the dynamics. See, for example \cite{mironchenko2015note}, where a class of linear and bilinear systems is studied.
See also \cite{dashkovskiy2013input} where a linearization principle is applied for a class of infinite-dimensional systems in a Banach space.
When focusing on parabolic partial differential equations, some works to compute ISS Lyapunov functions have also appeared, such as \cite{mironchenko2015construction,mazenc2011strict}.
For time-varying hyperbolic PDEs, construction of ISS Lyapunov functions has also been addressed in \cite{PrieurMazenc:hyper:11}.

For hyperbolic systems, when seeking robust stabilization with measurement errors, one could see that the results in \cite{EspiGira16} provide robust stability of $X(\cdot,t)$ in $\cL^2((0,1);\R^n)$ space by using static controllers and piecewise continuous solutions. However, the DSS estimate \eqref{eq:maxEstGen} requires stability in $\cC^0([0,1];\R^n)$ space equipped with maximum norm.

\subsection{Contribution}
For PDEs in general, the results on stability with respect to measurement errors have not yet appeared in the literature; The only exception being the recent work reported in \cite{KaraKrst16, KaraKrst17} which derives ISS bounds for 1-D parabolic systems in the presence of boundary disturbances but without the use of Lyapunov-based techniques. Such questions have remained unaddressed for hyperbolic PDEs, which is the topic of this paper.
Furthermore, the paper also includes a design element in the sense that the controller that achieves the DSS property is also being synthesized.
On the other hand, the problem of quantized control has mostly been studied in finite-dimensional systems so far \cite{Libe03Aut, NairFagn07, PrieTanw17, TanwPrie16b}, and this paper extends this problem setting to the case of PDEs. While there exist some works on quantized control of finite-dimensional systems in the presence of delays \cite{Libe06}, the model of hyperbolic PDEs treated in this paper is much more general and as such no direct comparison can be drawn between the earlier approaches and the techniques developed in this paper.

To achieve the aforementioned objectives, we propose to use a dynamic controller instead of a static one, as proposed in the conference version of this article~\cite{TanwPrie16}.
The reason for emphasizing the use of dynamic controllers is that we are looking for a way to bound $|X(z,t)|$, for each $z \in [0,1]$, which in our knowledge is only possible if a bound on the $\cH^1$-norm of $X(\cdot,t)$ is obtained, see Section~\ref{sec:prelim} for an explanation.
Existence of solutions $X$ in the space $\cH^1((0,1);\R^n)$ requires us to use inputs which are at least absolutely continuous.
If we allow perturbations $d$ to be discontinuous, static controllers would not yield smooth enough solutions.
The dynamic controller is therefore added to smoothen the discontinuity effect of the perturbations.

The addition of dynamic controller introduces a coupling of ODEs and PDEs in the closed loop which makes the analysis more challenging. Results on well-posedness of such coupled systems are proposed resulting in certain regularity of the closed-loop solutions, which is important to obtain appropriate estimates.
We use Lyapunov function based analysis to synthesize the controller and guarantee DSS with respect to the perturbation $d$.
The results are then used to study the application of quantized control:
We establish practical stability of the system, and derive ultimate bounds on the state trajectory in terms of the quantization error.
Compared to the conference article~\cite{TanwPrie16}, we provide rigorous mathematical proofs of the main results. The stability notions treated in the paper are more general, and several discussions related to connections with other stability notions are also included. Moreover, this article rigorously establishes the existence of solution for the closed-loop system in Theorem~\ref{thm:sol}, which was not addressed in \cite{TanwPrie16}.


\section{Refined Problem Formulation}\label{sec:prelim}

In this section, we recall some preliminaries associated with the solution space adopted for hyperbolic PDEs in our framework. Connections between the DSS notion and the norm associated with the solution space of the PDE are made explicit. Finally, the idea of dynamic controller is proposed to guarantee solutions with appropriate regularity.

\subsection{Preliminaries}
For a function $X:(0,1) \rightarrow \R^n$, we denote its gradient by $\partial X$ or $X_z$, and for $X:(0,1)\times \R_+ \to \R^n$, we denote the gradient with respect to first argument by $\partial_z X$, or $X_z$, and the gradient with respect to second argument by $\partial_t X$, or $X_t$ with the obvious interpretation that $z$ and $t$ denote the spatial and time variable, respectively. The space $\cW^{k,p}((0,1);\R^n)$ comprises functions for which the $k$-th derivative, denoted $\partial^k X$, exists and $\partial^k X \in \cL^p((0,1);\R^n)$. We use the shorthand $\cH^1$ for the space $\cW^{1,2}$. The space $\cH^1$ is naturally equipped with the $\cH^1$-norm defined as:
\[
\| X \|_{\cH^1((0,1);\R^n)} := (\| X \|_{\cL^2((0,1);\R^n)}^2 + \| \partial X \|_{\cL^2((0,1);\R^n)}^2)^{1/2}.
\]
In literature on stability analysis of hyperbolic PDEs, we find several notions of stability depending on the norm with which the solution space is equipped.
If we choose to control the $\cL^2$-norm of the state trajectory only, the problem is that it doesn't yield any bounds on $\max_{z\in[0,1]} \vert X(z,t) \vert$, for a given $t\ge 0$.

\subsection{Obtaining DSS using $\cH^1$-norm}
The motivation for introducing the $\cH^1((0,1);\R^n)$ solution space can be seen in the following proposition:
\begin{prop}\label{prop:maxH1}
Given any function $X:[0,1] \to \R^n$ such that $X \in \cC^0([0,1];\R^n) \cap \cH^{1}((0,1);\R^n)$. It holds that, for every $z \in [0,1]$,
\begin{equation}\label{eq:bndMaxGen}
\max_{z\in[0,1]} \vert X(z) \vert^2 \le \vert X(0) \vert^2 + \|X\|_{\cH^1((0,1);\R^n)}^2.
\end{equation}
\end{prop}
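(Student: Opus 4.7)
The proof I have in mind is the classical one-dimensional Sobolev embedding argument adapted so that the constant is explicit and the anchor point is $z=0$ rather than an arbitrary one obtained by averaging. The key observation is that in one spatial dimension, functions in $\cH^1((0,1);\R^n)$ admit an absolutely continuous representative, so the fundamental theorem of calculus applies componentwise. This will let me write $|X(z)|^2$ in integral form starting from $|X(0)|^2$ and then bound the integrand via a simple Cauchy--Schwarz/Young-type estimate.

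The plan is as follows. First I would consider the scalar function $\varphi(z) := |X(z)|^2 = \langle X(z), X(z)\rangle$, which belongs to $\cW^{1,1}((0,1);\R)$ since $X \in \cH^1((0,1);\R^n)$ and products of $\cH^1$ functions in one dimension are absolutely continuous. Its weak derivative is $\varphi'(z) = 2\langle X(z), \partial X(z)\rangle$. Applying the fundamental theorem of calculus (for absolutely continuous functions) on $[0,z]$ and using the continuous representative guaranteed by $X \in \cC^0([0,1];\R^n)$, I would write
\[
|X(z)|^2 = |X(0)|^2 + 2\int_0^z \langle X(s), \partial X(s) \rangle\, ds.
\]

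Next, I would estimate the integrand using the elementary inequality $2|\langle a,b\rangle| \le |a|^2 + |b|^2$ with $a = X(s)$ and $b = \partial X(s)$, giving
\[
|X(z)|^2 \le |X(0)|^2 + \int_0^z \bigl( |X(s)|^2 + |\partial X(s)|^2 \bigr) ds \le |X(0)|^2 + \|X\|_{\cL^2((0,1);\R^n)}^2 + \|\partial X\|_{\cL^2((0,1);\R^n)}^2,
\]
where in the last step I simply enlarged the domain of integration from $[0,z] \subseteq [0,1]$ to $[0,1]$. Recognizing the last two terms as $\|X\|_{\cH^1((0,1);\R^n)}^2$, and noting that the right-hand side is independent of $z$, taking the maximum over $z\in[0,1]$ on the left yields the desired inequality \eqref{eq:bndMaxGen}.

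The only minor obstacle is the justification of the fundamental theorem of calculus for $\varphi$, which requires the Sobolev embedding $\cH^1((0,1)) \hookrightarrow \cC^0([0,1])$ in one dimension together with the chain rule for Sobolev functions. Since the statement explicitly assumes $X \in \cC^0([0,1];\R^n)$, this regularity is already in place, so the argument is essentially a two-line computation with no genuine difficulty.
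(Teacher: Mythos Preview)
Your proposal is correct and follows essentially the same approach as the paper: apply the fundamental theorem of calculus to $|X(z)|^2$ anchored at $z=0$, bound $2\langle X,\partial X\rangle$ by $|X|^2+|\partial X|^2$, and enlarge the integration domain to recover the $\cH^1$ norm. You are slightly more careful than the paper in justifying the absolute continuity of $\varphi$, but the argument is otherwise identical.
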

\begin{proof}
For each $z \in [0,1]$, it is observed that
\begin{align*}
 &\vert X(z)\vert^2-\vert X(0)\vert^2 
= \int_0^z \frac{d}{ds}(\vert X (s)\vert^2)ds\\
& \quad = 2\int_0^z  (X(s))^\top \partial X (s)ds\\
&\quad \le \int_0^z  \vert X(s)\vert^2 ds + \int _0^z \vert \partial X(s)\vert^2ds \\
& \quad \le \|X\|_{\cL^2((0,1);\R^n)}^2 + \|\partial X\|_{\cL^2((0,1);\R^n)}^2\\
& \quad = \|X\|_{\cH^1((0,1);\R^n)}^2
\end{align*}
which gives the desired bound.
\end{proof}

\begin{rem}
Proposition~1 basically allows to get the bounds on $\cC^0$ norm of the state $X$ in terms of its $\cH^1$ norm. Then, one can work with Lyapunov functions which basically quantify $\cH^1$ norm of the state, and work with its derivative. In the literature, we see that Agmon's inequality \cite[Lemma~2.4]{KrstSmys08} is also used to get a bound on $\cL^\infty$ norm in terms of $\cH^1$ norm as remarked in \cite[Remark~4]{mironchenko2015construction}. The inequality \eqref{eq:bndMaxGen} is however different from the conventional Agmon's inequality \cite[Lemma~2.4]{KrstSmys08}.
\end{rem}

In the light of Proposition~\ref{prop:maxH1}, one can obtain the estimate \eqref{eq:maxEstGen} from the inequality \eqref{eq:bndMaxGen}, by ensuring that the control input $u$ is chosen such that for each $t \ge 0$:
\begin{itemize}
\item The solution $X(\cdot, t)$ belongs to $\cH^1((0,1);\R^n)$;
\item It holds that $\vert X(0,t)\vert$ and $\| X(\cdot,t) \|_{\cH^1((0,1);\R^n)}$ are bounded by the size of the disturbance $\|d_{[0,t]}\|_\infty$ plus some exponentially decreasing term in time.
\end{itemize}

To achieve these objectives, the use of static controllers of the form $u(t) = K y(t)$, will result in trajectories $X$ which are not differential with respect to spatial variable due to (possibly discontinuous) disturbances, and hence the solutions are not contained in $\cH^1((0,1);\R^n)$. To remedy this problem, we propose the use of dynamic controllers for stabilization.

\subsection{Using Dynamic Controller for $\cH^1$-regular Solutions}
\begin{figure}[!b]
\centering
\begin{tikzpicture}[xscale = 0.5, yscale = 0.6, circuit ee IEC,
every info/.style={font=\footnotesize}]
\draw (-0.5,0) node [rectangle, rounded corners, draw, minimum height =0.65cm, text centered] (sys) {$ \left\{ \begin{aligned}& \partial_t X (z,t) +\Lambda \partial_z X (z,t) = 0\\ &X(0,t) = HX(1,t) + Bu(t) \end{aligned}\right .$};
\draw (0,-5) node [rectangle, rounded corners, draw, minimum height =1cm, text centered] (obs) {\small $\left\{ \begin{aligned} \dot \eta (t)& = R\eta(t)  + S y(t) \\ u(t)&= K \eta (t)\end{aligned}\right .$};
\coordinate (upSamp) at ([xshift=1.5cm]sys.east);
\coordinate (bl) at ([xshift=-4cm]obs.west);
\coordinate (tr) at ([xshift=1.5cm]upSamp);
\coordinate (ftr) at ([xshift=2.5cm]tr);
\coordinate (br) at ([yshift=-5cm]tr);
\coordinate (tl) at ([yshift=5cm]bl);
\coordinate (upr) at ([yshift=-1.25cm]tr);
\coordinate (downr) at ([yshift=-3.5cm]tr);
\coordinate (upl) at ([yshift=-1.25cm]tl);
\coordinate (downl) at ([yshift=-3.5cm]tl);
\draw (tr) node [circle, draw, minimum height =0.65cm] (sum) {};

\draw [thick, ->] (sys.east) --node[anchor=south] {$X(1,t)$} (sum.west);
\draw [thick,->] (ftr) node[anchor=south east] {$d(t)$}-- (sum.east);
\draw [thick,-] (sum.south)--(br);
\draw [thick,->] (br) -- node[anchor=north] {$\qquad\qquad y(t) = X(1,t) + d(t)$} (obs.east);
\draw [thick, -] (obs.west) -- (bl);
\draw [thick,-] (bl)--(tl);
\draw [thick,->] (tl) -- (sys.west);
\end{tikzpicture}
\caption{Control architecture used for stabilization of hyperbolic system in the presence of disturbances.}
\label{fig:loopQuantHyp}
\end{figure}
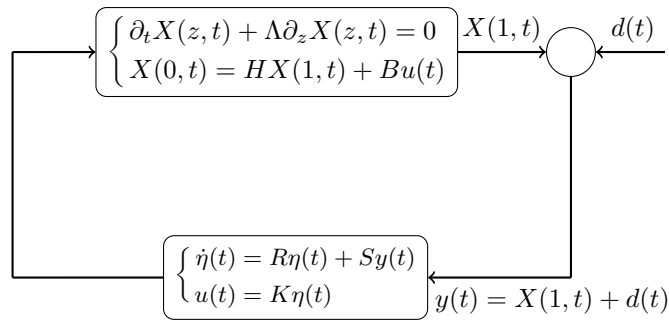
For system class \eqref{eq:sysHyp}, \eqref{eq:initCond}, \eqref{eq:defOut}, we are interested in designing control inputs $u$ that are absolutely continuous functions of time, so that their derivative is defined Lebesgue a.e.
For such inputs, we seek a solution $X \in \cC^0([0,T]; \cH^1((0,1);\R^n))$ where $\cC^0$ denotes the space of continuous functions equipped with supremum norm.

More precisely, we consider the problem of designing a dynamic controller with ODEs, which has the form
\begin{subequations}\label{eq:contGen}
\begin{align}
\dot \eta(t) &= R \eta (t) + S y(t) \label{eq:contGena}\\
u(t) & = K \eta(t) \label{eq:contGenb}
\end{align}
\end{subequations}
where the matrices $R \in \R^{n \times n}$, $S \in \R^{n \times n}$, and $K\in \R^{m \times n}$, need to be chosen appropriately.
Thus, the resulting closed loop is depicted in Figure~\ref{fig:loopQuantHyp}.
Intuitively speaking, by using such a controller, the discontinuities of the output $y$ are integrated via equation \eqref{eq:contGena} which results in $u$ being absolutely continuous.
The result on existence and uniqueness of solutions for the closed-loop system \eqref{eq:sysHyp}, \eqref{eq:initCond}, \eqref{eq:defOut}, \eqref{eq:contGen}, is formally developed in Section~\ref{sec:cont}.
Afterwards, in Section~\ref{sec:lyap}, we design the parameters of the controller \eqref{eq:contGen}, and derive conditions on the system and controller data which establish the DSS estimate \eqref{eq:maxEstGen}.

\section{Existence of Solutions}\label{sec:cont}

The objective of this section is to develop a result on existence and uniqueness of solutions for the closed-loop system \eqref{eq:sysHyp}, \eqref{eq:initCond}, \eqref{eq:defOut}, \eqref{eq:contGen}, demonstrated in Figure~\ref{fig:loopQuantHyp}.
Before presenting our result, we remark that the solutions of hyperbolic PDEs is a well-studied topic.
For the intermediate results, we refer the reader to \cite[Appendix~A]{BastCoro16} and \cite[Chapter~3]{CurtZwar95}.
In \cite{BastCoro16}, the authors first present results with $\cH^1$-regularity for the autonomous with $u = 0$.
The results for ODE coupled with hyperbolic PDE with $d=0$ with $\cL^2$ regularity are also proven.
However, in these works, with $d \in \cL^\infty$, which introduces certain discontinuities, the solutions with $\cH^1$-regularity are not discussed.
On the other hand, the well-posedness results are presented for systems with dynamics described by infinitesimal generators of continuous semigroups.

In this section, our contribution lies in presenting a result on well-posedness of the ODE-PDE coupled system of Figure~\ref{fig:loopQuantHyp}.
We do so by building on the results described in \cite[Appendix~A]{BastCoro16} and \cite[Chapter~3]{CurtZwar95}.


To do so, we start by constructing the operator $\cA$ as follows:
\begin{subequations}
\begin{gather}
\begin{split}
\dom (\cA) &:= \Bigg \{ (\varphi,\eta) \in \cH^1((0,1);\R^n) \times \R^n;\\
& \qquad\  \begin{pmatrix} \varphi (0) \\ \eta \end{pmatrix} = \begin{bmatrix} H & BK \\ 0 & I\end{bmatrix} \begin{pmatrix}\varphi(1) \\ \eta \end{pmatrix} \Bigg\},
\end{split} \\
\cA \begin{pmatrix} \varphi \\ \eta \end{pmatrix} := \begin{pmatrix}- \Lambda \varphi_z \\ R \eta \end{pmatrix}.
\end{gather}
\end{subequations}

Next, we introduce the {\em perturbation operator} $\cB : \cH^1((0,1);\R^n) \times \R^n \to \cH^1((0,1);\R^n) \times \R^n$ as follows:
\begin{subequations}
\begin{gather}
\cB \begin{pmatrix} \varphi \\ \eta\end{pmatrix} =  \begin{pmatrix} 0 \\ S \varphi (1) \end{pmatrix}.
\end{gather}
\end{subequations}

Using these operators $\cA$ and $\cB$, and letting $x = \col(X, \eta)$, one can write the closed-loop system \eqref{eq:sysHyp}, \eqref{eq:initCond}, \eqref{eq:defOut} and \eqref{eq:contGen} as follows:
\begin{subequations}\label{eq:deInfDim}
\begin{gather}
\dot x = \cA x + \cB x + \widetilde d\\
x(0) = x^0 \in \dom(\cA),
\end{gather}
\end{subequations}
where $\widetilde d = \begin{pmatrix} 0 \\ S \, d \end{pmatrix}$.
We now prove a result on the well-posedness of system \eqref{eq:deInfDim}. Because $d$ is possibly discontinuous, the classical solutions (where $\dot x$ is continuous) do not exist, and one must work with the notion of a weak solution \cite[Definition~3.1.6]{CurtZwar95}.

\begin{defn}[Weak Solution]\label{def:soln}
Let $d \in \cL^{\infty}([0,T];\R^n)$. If $(\cA+\cB)$ is an infinitesimal generator of a $\cC^0$-semigroup $\cT$, then we call $x:[0,T] \to \cH^{1}((0,1);\R^n) \times \R^n$ a {\em weak solution} to \eqref{eq:deInfDim} when the following two conditions hold:
\begin{itemize}
\item $x \in \cC^0([0,T];\cH^1((0,1);\R^n)\times \R^n)$, and
\item For each $g \in \cC^0([0,T];\cH^1((0,1);\R^n)\times \R^n)$
\[
\int_0^T \innProd{x(s)}{g(s)}\,ds + \int_0^T \innProd{\widetilde d(s)}{h(s)}\,ds 
+ \int_0^T \innProd{x^0}{h(0)}\,ds = 0
\]
where $h(t):= -\int_t^T \cT^*(s-t) g(s) \, ds$, $\cT^*$ is the adjoint of the operator $\cT$, and the inner product is with respect to $\cH^1((0,1);\R^n) \times \R^n$.
\end{itemize}
\end{defn}

The well-posedness of \eqref{eq:deInfDim} is now obtained from the results given in \cite[Chapter~3]{CurtZwar95}. To invoke these results, the operators $\cA$ and $\cB$ must satisfy certain conditions. The desired properties of these operators are listed in the lemmas that follow, and their proofs are provided in the Appendix.

\begin{lem}\label{lem:AInfGen}
The operator $\cA$ is an infinitesimal generator of a $\cC^0$-semigroup.
\end{lem}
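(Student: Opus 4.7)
The plan is to verify the three hypotheses of the Hille--Yosida theorem (as stated, for instance, in \cite[Chapter~3]{CurtZwar95}): that $\cA$ is densely defined, closed, and that a right half-line belongs to its resolvent set with the requisite resolvent bound. Because the $\eta$-block of $\cA$ acts simply as the bounded operator $R$, the analysis essentially reduces to understanding the transport operator $\varphi \mapsto -\Lambda \varphi_z$ subject to the coupled boundary constraint $\varphi(0) = H\varphi(1) + BK\eta$ built into $\dom(\cA)$.

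First I would establish density of $\dom(\cA)$ in $\cH^1((0,1);\R^n) \times \R^n$. Given $(\psi, \zeta)$ in the state space, the idea is to keep $\eta = \zeta$ and perturb $\psi$ in a shrinking neighbourhood of $z=0$ so as to match the prescribed boundary value $H\psi(1)+BK\zeta$. A bump-function interpolation then yields approximants $(\varphi_\epsilon, \zeta) \in \dom(\cA)$ with $\|(\varphi_\epsilon, \zeta) - (\psi, \zeta)\|_{\cH^1 \times \R^n} \to 0$ as $\epsilon \to 0$. Closedness of $\cA$ will follow from the closedness of the weak derivative on $\cH^1$ together with the continuity of the trace maps $\varphi \mapsto \varphi(0)$ and $\varphi \mapsto \varphi(1)$, which preserve the boundary relation under limits.

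The main computation is the range condition: for sufficiently large $\lambda$, I need to solve $(\lambda I - \cA)(\varphi, \eta) = (f, g)$ for arbitrary $(f,g)$ in the state space. The $\eta$-equation $(\lambda I - R)\eta = g$ has a unique solution whenever $\lambda \notin \spec(R)$. The remaining equation $\lambda \varphi + \Lambda \varphi_z = f$ is a first-order linear ODE in $z$ which, because $\Lambda$ is diagonal and positive definite, decouples componentwise and admits the explicit solution
\[
\varphi(z) = e^{-\lambda \Lambda^{-1} z}\varphi(0) + \int_0^z e^{-\lambda \Lambda^{-1}(z-s)} \Lambda^{-1} f(s)\, ds.
\]
Evaluating at $z=1$ and plugging into the boundary condition yields a linear equation of the form $(I - H e^{-\lambda \Lambda^{-1}})\varphi(0) = w(\lambda,f,\eta)$. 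Since all eigenvalues of $\Lambda$ are positive, $\|H e^{-\lambda \Lambda^{-1}}\| \to 0$ as $\lambda \to \infty$, so the coefficient matrix is invertible for all sufficiently large $\lambda$, which determines $\varphi(0)$ and hence a unique $\varphi \in \cH^1$.

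The step I expect to be the main obstacle is extracting a sharp enough resolvent bound to satisfy the iterated Hille--Yosida estimate $\|(\lambda I - \cA)^{-k}\| \le M/(\lambda - \omega)^k$ uniformly in $\lambda$ and $k$. This requires careful bookkeeping of the $\cL^2$-norms of both $\varphi$ and $\varphi_z$ built out of the explicit formula above, together with the boundary contribution coming from $\varphi(0)$. If the direct resolvent computation turns out to be awkward, a cleaner fallback is to equip $\cH^1((0,1);\R^n)\times \R^n$ with an equivalent weighted inner product adapted to the characteristic speeds in $\Lambda$, under which $\cA$ becomes quasi-dissipative, and then invoke the Lumer--Phillips theorem, mirroring the strategy used for related transport operators in \cite{BastCoro16}.
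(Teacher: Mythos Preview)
Your primary strategy has a genuine gap at the very first step. You propose to show that $\dom(\cA)$ is dense in $\cH^1((0,1);\R^n)\times\R^n$, but this is false: the boundary constraint $\varphi(0)=H\varphi(1)+BK\eta$ is the kernel of a continuous linear map on $\cH^1\times\R^n$ (the trace maps $\varphi\mapsto\varphi(0)$ and $\varphi\mapsto\varphi(1)$ are continuous on $\cH^1$), so $\dom(\cA)$ is a \emph{closed} proper subspace there. Your bump-function approximation cannot work in the $\cH^1$-norm: correcting a boundary value by a fixed amount $\delta\neq 0$ over an interval of length $\epsilon$ forces the $\cL^2$-norm of the derivative of the correction to blow up like $|\delta|/\sqrt{\epsilon}$. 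Relatedly, for $\varphi\in\cH^1$ the image $-\Lambda\varphi_z$ lies only in $\cL^2$, so $\cA$ as written does not even map $\dom(\cA)$ into $\cH^1\times\R^n$.

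The paper resolves this by taking the ambient Hilbert space to be $\cL^2((0,1);\R^n)\times\R^n$, not $\cH^1\times\R^n$; the $\cH^1$-regularity of trajectories is recovered afterward from the initial data lying in $\dom(\cA)$. On that space the paper follows essentially your fallback route: it equips $\cL^2\times\R^n$ with an exponentially weighted inner product $\langle\cdot,\cdot\rangle_\mu$, shows $\cA$ is quasi-dissipative, then computes the adjoint $\cA^*$ explicitly and shows it too is quasi-dissipative (this second dissipativity is what replaces the range condition you would otherwise need for Lumer--Phillips), checks closedness and density in $\cL^2\times\R^n$ (where your bump argument \emph{does} succeed), and concludes via \cite[Corollary~2.2.3]{CurtZwar95}. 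The explicit Hille--Yosida resolvent estimate you sketch is never attempted, and as you anticipated it would be the harder road even after the ambient space is corrected.
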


The recipe used for proving this result is inspired by the development given in \cite[Appendix~A, Proof of Thm.~A.I]{BastCoro16}, that is, we show that the operators $\cA$ and its adjoint $\cA^*$ are quasi-dissipative, and that $\cA$ is closed and densely defined. The difference in the calculations arises due to the presence of $\eta$-dynamics and the domain of $\cA$ depends on the $\eta$-dynamics. This changes the construction of the adjoint operator as well. One then invokes a generalization of Lumer-Phillips theorem reported in \cite[Corollary~2.2.3]{CurtZwar95} to show that the operators with such properties are infinitesimal generators of strongly continuous semigroups.

\begin{lem}\label{lem:boundedB}
The operator $\cB$ is a bounded linear operator.
\end{lem}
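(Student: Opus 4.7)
My plan is to verify linearity (which is immediate from the algebraic definition of $\cB$) and then focus on boundedness, for which the entire content reduces to a trace estimate: the point-evaluation map $\varphi \mapsto \varphi(1)$ from $\cH^1((0,1);\R^n)$ into $\R^n$ is a bounded linear operator. Once this is in hand, boundedness of $\cB$ follows by direct estimation, since the first component of $\cB(\varphi,\eta)$ is zero and the second is $S\varphi(1)$, where $S$ is a fixed matrix.

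Concretely, I would equip the product space $\cH^1((0,1);\R^n)\times\R^n$ with the natural norm
\[
\left\| \begin{pmatrix} \varphi \\ \eta\end{pmatrix}\right\|^2 := \|\varphi\|^2_{\cH^1((0,1);\R^n)} + |\eta|^2,
\]
and then write
\[
\left\| \cB \begin{pmatrix}\varphi\\\eta\end{pmatrix}\right\|^2 = |S\varphi(1)|^2 \le |S|^2\, |\varphi(1)|^2,
\]
so the task reduces to bounding $|\varphi(1)|$ by $\|\varphi\|_{\cH^1}$ up to a multiplicative constant independent of $\varphi$.

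The key step, and the only one requiring an actual argument, is the trace bound. I would mimic the computation in the proof of Proposition~\ref{prop:maxH1}: for any $z\in[0,1]$,
\[
|\varphi(1)|^2 - |\varphi(z)|^2 = 2\int_z^1 \varphi(s)^\top \partial\varphi(s)\,ds,
\]
and then integrate with respect to $z$ over $[0,1]$ to obtain
\[
|\varphi(1)|^2 \le \|\varphi\|_{\cL^2((0,1);\R^n)}^2 + 2\int_0^1 |\varphi(s)|\,|\partial\varphi(s)|\,ds \le \|\varphi\|_{\cL^2}^2 + \|\varphi\|_{\cL^2}^2 + \|\partial\varphi\|_{\cL^2}^2 \le 2\|\varphi\|_{\cH^1((0,1);\R^n)}^2,
\]
where I used Young's inequality $2ab \le a^2 + b^2$. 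Combining this with the previous display yields
\[
\left\| \cB \begin{pmatrix}\varphi\\\eta\end{pmatrix}\right\| \le \sqrt{2}\,|S|\,\|\varphi\|_{\cH^1((0,1);\R^n)} \le \sqrt{2}\,|S| \left\|\begin{pmatrix}\varphi\\\eta\end{pmatrix}\right\|,
\]
which establishes boundedness with operator norm at most $\sqrt{2}\,|S|$. The main (and really only) obstacle is the trace estimate, which is essentially an instance of the 1D Sobolev embedding $\cH^1((0,1))\hookrightarrow \cC^0([0,1])$; I would derive it from scratch as above rather than cite it, since the paper already uses the same style of computation in Proposition~\ref{prop:maxH1}.
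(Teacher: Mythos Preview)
Your proposal is correct and follows essentially the same strategy as the paper: linearity is immediate, and boundedness reduces to the trace estimate $|\varphi(1)|^2 \le 2\|\varphi\|_{\cH^1}^2$, after which the matrix bound for $S$ finishes the job. The only minor difference is in how the trace inequality itself is derived: the paper uses the identity $\varphi(1)=\int_0^1 \tfrac{d}{ds}\bigl(s\varphi(s)\bigr)\,ds=\int_0^1\bigl(s\varphi_z(s)+\varphi(s)\bigr)\,ds$ followed by Cauchy--Schwarz, whereas you integrate the identity $|\varphi(1)|^2-|\varphi(z)|^2=2\int_z^1\varphi^\top\partial\varphi$ over $z$ in the style of Proposition~\ref{prop:maxH1}. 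Both routes are elementary one-dimensional Sobolev embedding arguments and yield the same constant, so there is no substantive gap between your approach and the paper's.
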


The linearity of the operator $\cB$ is obvious from its definition. For showing boundedness, one should be careful in using the right norms, because $\cB$ embeds a finite dimensional vector in the space $\cH^{1}((0,1);\R^n) \times \R^n$.

The properties of the operators given in Lemmas~\ref{lem:AInfGen} and \ref{lem:boundedB} lead to the following result:

\begin{lem}\label{lem:A+B}
The operator $\cA + \cB$ is an infinitesimal generator of a $\cC^0$-semigroup.
\end{lem}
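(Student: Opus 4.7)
The plan is to invoke the standard bounded perturbation theorem for $\cC^0$-semigroups: if $\cA$ generates a $\cC^0$-semigroup on a Banach space $\cX$ and $\cB$ is a bounded linear operator on $\cX$, then $\cA + \cB$ with $\dom(\cA + \cB) := \dom(\cA)$ also generates a $\cC^0$-semigroup on $\cX$. In the terminology of the monograph \cite{CurtZwar95} that the paper has already been relying upon for well-posedness, this is exactly Theorem~3.2.1. The underlying construction produces the perturbed semigroup through the convergent Dyson--Phillips series
\begin{equation*}
\cT^{\cA+\cB}(t) \;=\; \sum_{k \ge 0} \cT_k(t), \qquad \cT_0(t) = \cT^{\cA}(t), \qquad \cT_{k+1}(t) = \int_0^t \cT^{\cA}(t-s)\,\cB\,\cT_k(s)\,ds,
\end{equation*}
which converges in operator norm uniformly on bounded time intervals precisely because $\cB$ is bounded; one then checks that the infinitesimal generator of this semigroup agrees with $\cA + \cB$ on $\dom(\cA)$.

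With the perturbation theorem in hand, the proof of Lemma~\ref{lem:A+B} reduces to three short steps. Step one fixes the ambient Hilbert space as $\cX := \cH^1((0,1);\R^n) \times \R^n$ endowed with its natural product norm, which is the space in which both operators have been introduced. Step two invokes Lemma~\ref{lem:AInfGen} to assert that $\cA$ generates a $\cC^0$-semigroup on $\cX$. Step three invokes Lemma~\ref{lem:boundedB} to assert that $\cB$ is a bounded linear operator on $\cX$. Applying the perturbation theorem to these two inputs immediately yields the claim, with domain inherited as $\dom(\cA + \cB) = \dom(\cA)$.

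There is no substantial obstacle remaining at this stage: the analytically delicate work, namely the quasi-dissipativity, closedness and dense definition required to generate a semigroup, as well as the boundedness estimate for the trace-type operator $\varphi \mapsto S\varphi(1)$, has already been discharged inside Lemmas~\ref{lem:AInfGen} and~\ref{lem:boundedB}. The only consistency point worth flagging in the write-up is that the boundedness of $\cB$ in Lemma~\ref{lem:boundedB} is measured in the very same $\cH^1((0,1);\R^n) \times \R^n$ topology in which $\cA$ is shown to generate a semigroup, so that the perturbation theorem applies without any topological mismatch. Because $\dom(\cA + \cB)=\dom(\cA)$, the coupling boundary condition $\varphi(0) = H\varphi(1) + BK\eta$ that is baked into $\dom(\cA)$ continues to be enforced on states of the full closed-loop operator, which is what is needed for the weak-solution concept of Definition~\ref{def:soln} to remain compatible with the setting of Section~\ref{sec:cont}.
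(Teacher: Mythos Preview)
Your proposal is correct and follows essentially the same route as the paper: the paper's entire argument is a one-line invocation of \cite[Theorem~3.2.1]{CurtZwar95}, the bounded perturbation theorem, fed by Lemmas~\ref{lem:AInfGen} and~\ref{lem:boundedB}, which is precisely what you spell out (with the added detail of the Dyson--Phillips series and the useful consistency remark about the ambient topology).
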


The proof then follows by invoking \cite[Theorem~3.2.1]{CurtZwar95}. In fact, the resulting semigroup has a lower triangular structure due to special form of $\cA$ and $\cB$.

Coming back to the system description~\eqref{eq:deInfDim}, we now invoke the properties of the operator $\cA + \cB$ listed in Lemma~\ref{lem:A+B}, and arrive at the following result using \cite[Theorem~3.1.7]{CurtZwar95}.

\begin{thm}\label{thm:sol}
For a given $T > 0$, and $d \in \cL^\infty([0,T];\R^n)$, there is a unique {\em weak} solution to system \eqref{eq:deInfDim}. Equivalently, for each $(X^0, \eta^0) \in \dom(\cA)$, the closed-loop system \eqref{eq:sysHyp}, \eqref{eq:initCond}, \eqref{eq:defOut} and \eqref{eq:contGen} has a unique {\em weak} solution in the space $\cC^0([0,T];\cH^1((0,1);\R^n) \times \R^n)$.
\end{thm}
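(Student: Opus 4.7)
The plan is to view \eqref{eq:deInfDim} as an abstract inhomogeneous Cauchy problem $\dot x = (\cA+\cB)x + \widetilde d$, $x(0) = x^0$, on the product Hilbert space $\cX := \cH^1((0,1);\R^n) \times \R^n$, and to reduce the theorem to a direct invocation of the variation-of-constants result \cite[Theorem~3.1.7]{CurtZwar95}. The genuine analytical content has already been absorbed into Lemmas~\ref{lem:AInfGen}--\ref{lem:A+B}: together they yield that $\cA + \cB$ is the infinitesimal generator of a $\cC^0$-semigroup $\cT(\cdot)$ on $\cX$, which is precisely the standing hypothesis of \cite[Theorem~3.1.7]{CurtZwar95}.

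First, I would verify that the forcing term $\widetilde d$ has the regularity demanded by that abstract result. For almost every $t \in [0,T]$, the vector
\[
\widetilde d(t) = \begin{pmatrix} 0 \\ S\, d(t) \end{pmatrix}
\]
lies in $\cX$: its first component is the zero element of $\cH^1((0,1);\R^n)$, and its second component is a vector in $\R^n$. Moreover, $\|\widetilde d(t)\|_{\cX} = |S\, d(t)| \le \|S\|\,\|d_{[0,T]}\|_\infty$ for a.e.\ $t$, so that $\widetilde d \in \cL^\infty([0,T]; \cX) \subset \cL^1([0,T]; \cX)$, which is enough to feed the Curtain--Zwart theorem.

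Next, I would apply \cite[Theorem~3.1.7]{CurtZwar95} to conclude that \eqref{eq:deInfDim} admits a unique weak solution in the sense of Definition~\ref{def:soln}, given by the mild formula
\[
x(t) = \cT(t)\, x^0 + \int_0^t \cT(t-s)\, \widetilde d(s)\,ds, \qquad t \in [0,T],
\]
with $x \in \cC^0([0,T]; \cX)$. To transfer this statement back to the closed loop, I would unpack $x = \col(X,\eta)$: by the very construction of $\cA$, $\cB$ and $\widetilde d$, the componentwise reading of \eqref{eq:deInfDim} reproduces the PDE \eqref{eq:sysHyp}, the boundary condition \eqref{eq:initCond} with $u = K\eta$, the output equation \eqref{eq:defOut}, and the controller dynamics \eqref{eq:contGen}, while the coupling $X(0,t) = HX(1,t) + BK\eta(t)$ is exactly the domain constraint $\dom(\cA)$. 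Hence the unique weak $x$ in $\cX$ corresponds to a unique pair $(X,\eta) \in \cC^0([0,T];\cH^1((0,1);\R^n) \times \R^n)$ satisfying the closed loop in the weak sense, which is the second assertion of the theorem.

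The only real obstacle lies upstream, in Lemmas~\ref{lem:AInfGen} and \ref{lem:boundedB}: one must define $\cA$ on the product space so that the boundary coupling through $BK\eta$ is encoded in $\dom(\cA)$, verify quasi-dissipativity of both $\cA$ and its adjoint (the $\eta$-block contributes additional terms), and then check that $\cB$, though it uses the boundary trace $\varphi(1)$, is bounded from $\cX$ into $\cX$ — which hinges on the continuous embedding $\cH^1((0,1);\R^n) \hookrightarrow \cC^0([0,1];\R^n)$ controlling the trace at $z=1$. Once those ingredients are in hand, Theorem~\ref{thm:sol} itself is essentially a clean citation of the abstract framework.
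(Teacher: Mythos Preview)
Your proposal is correct and mirrors the paper's own argument: reduce \eqref{eq:deInfDim} to the abstract inhomogeneous Cauchy problem, use Lemmas~\ref{lem:AInfGen}--\ref{lem:A+B} to ensure $\cA+\cB$ generates a $\cC^0$-semigroup on $\cH^1((0,1);\R^n)\times\R^n$, and then invoke \cite[Theorem~3.1.7]{CurtZwar95}. Your explicit verification that $\widetilde d \in \cL^\infty \subset \cL^1$ and your unpacking of $x=(X,\eta)$ back to the closed loop are details the paper leaves implicit, but the route is the same.
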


\begin{rem}
The so-called compatibility conditions on the initial condition $(X^0,\eta^0)$, that are required for $\cH^1$-regularity are imposed by requiring that $(X^0,\eta^0)$ belong to $\dom(\cA)$. Such a condition is essential and hence the choice of $\eta^0$ depends upon $X^0$.
It is noted that in \cite{CoronBastin08}, the authors propose {\bf two} compatibility conditions for the initial state because they seek solution $X \in \cH^2 ((0,1);\R^n)$. We only need solutions where $X$ is $\cH^1$-regular, so only one such condition appears in our analysis.
\end{rem}


\begin{rem}\label{rem:sgProp}
It follows from the Definition~\ref{def:soln}, that the mild solution to equation~\eqref{eq:deInfDim} is given by:
\[
x(t) = \cT(t) \, x^0 + \int_0^t \cT(t-\tau) \, \widetilde d(\tau) \, d\tau.
\]
Thus, for any $t > s \ge 0$, we have
\[
x(t) = \cT(t - s) \, x(s) + \int_{s}^{t} \cT(t-\tau) \, \widetilde d(\tau) \, d\tau.
\]
\end{rem}

\section{Closed Loop and Stability Analysis}\label{sec:lyap}

As a solution to the problem formulated in Section~\ref{sec:prelim}, we now provide more structure for the controller dynamics, and study the stability of the closed-loop system.
The conditions on the system parameters that guarantee stability are then provided by constructing a Lyapunov-function.

\subsection{Control Architecture and Closed Loop}

The controller that we choose for our purposes is described by the following equations:
\begin{subequations}\label{eq:contLin}
\begin{align}
\dot \eta (t) & = -\alpha (\eta(t)- y(t)) \notag \\
& =  -\alpha \, \eta(t) + \alpha X(1,t) + \alpha d(t) \label{eq:contLina}\\
\eta(0)&=\eta^0\label{eq:init:eta}\\
u(t) & = K \eta(t), \label{eq:contLinb}
\end{align}
\end{subequations}
where $\eta^0 \in \R^n$ is the initial condition for the controller dynamics.
This corresponds to choosing $R = -\alpha\, I_{n\times n}$, and $S = -R$ in \eqref{eq:contGena}.
The conditions on the constant $\alpha >0$, and the matrix $K \in \R^{m\times n}$ will be stated in the statement of Theorem~\ref{thm:mainISS}.

For the system in the closed loop, the dynamics of the state trajectory $X$ are given by
\begin{subequations}\label{eq:closed:loop:X}
\begin{gather}
X_t (z,t) +\Lambda X_z (z,t) = 0, \label{eq:closed:loop:X:dynamics}\\
X(z,0)=X^0(z), \; \forall z\in [0,1],\\
X(0,t) = H X(1,t) + BK \eta (t).\label{eq:closed:loop:X:bd}
\end{gather}
\end{subequations}

For what follows, we are also interested in analyzing the dynamics of $\partial_z X =: X_z$ which are derived as follows:
\begin{equation}
\frac{\partial X_z}{\partial t} (z,t) +\Lambda \frac{\partial X_z}{\partial z} (z,t) = 0.
\end{equation}
To obtain the boundary condition, from (\ref{eq:closed:loop:X:bd}), we have
\[
X_t(0,t) = HX_t(1,t) + B K \dot \eta(t).
\]
Substituting $X_t(z,t) = -\Lambda X_z(z,t)$ for each $z \in [0,1]$, we get
\begin{equation}\label{eq:bCondXz}
X_z(0,t) = \Lambda^{-1} H \Lambda X_z(1,t) - \Lambda^{-1} B K \, \dot \eta(t).
\end{equation}

\begin{rem}
Note that the equation \eqref{eq:bCondXz} would be well defined for $X(\cdot,t) \in \cC^1((0,1);\R^n)$ with the obvious interpretation that $X_z(0,t):=\lim_{\eps \searrow 0} X_z(\eps,t)$, and $X_z(1,t):=\lim_{\eps \searrow 0} X_z(1-\eps,t)$. The same interpretation holds for \eqref{eq:closed:loop:X:bd}. In the sequel, when carrying out calculations in stability analysis, it will be assumed that $X(\cdot,t) \in \cC^1((0,1);\R^n)$, and by the density argument, the same conclusion would hold for $X(\cdot,t) \in \cH^1((0,1);\R^n)$, as done in \cite{CoronBastin08}.
\end{rem}

\subsection{Stability Result}
The second main contribution of the paper is to present conditions on the controller dynamics \eqref{eq:contLin} which results in stability of system~\eqref{eq:sysHyp} and robustness with respect to the measurement disturbances $d$.
To state the result, we introduce some notation. Let $\cD_+^n$ denote the set of diagonal positive definite matrices. For scalars $\mu>0$ and $0 < \nu < 1$, let $\rho := e^{-\mu} - \nu^2$; let $F:= BK$, and $Q:=F^\top D^2 F$ for $D \in \cD_+^n$; and finally, let $G:= H^\top D^2 F$.
We denote by $\Omega$ the symmetric matrix
\[
\begin{bmatrix}
\rho \beta_1 D^2 & -\beta_1(G + Q) & 0\\
* & 2 \alpha \beta_3 - (\beta_1+\alpha^2\beta_2) Q  & \beta_3 I+ \alpha \beta_2 G \\
* & * & (\rho D^2 + Q + G+G^\top)\beta_2
\end{bmatrix}
\]
in which $\alpha, \beta_1, \beta_2, \beta_3$ are some positive constants, and $*$ denotes the transposed matrix block. In the following statement, we denote the induced-Euclidean norm of a matrix $M$ by $\|M\|_2$.

\begin{thm}\label{thm:mainISS}
Assume that there exist scalars $\mu,\nu > 0$, a matrix $D \in \cD_+^{n}$, the gain matrix $K$, and the positive constants $\alpha,\beta_1,\beta_2,\beta_3$ in the definition of $\Omega$ such that
\begin{subequations}\label{eq:gainCond}
\begin{gather}
\| D(H+BK) D^{-1} \|_2 \le \nu < 1, \label{eq:gainConda}\\
\Omega > \zeta I \label{eq:gainCondb}
\end{gather}
\end{subequations}
for some scalar $\zeta > 0$.
Then, the closed-loop system satisfies the DSS estimate \eqref{eq:maxEstGen} with
\begin{equation}\label{eq:defMX0}
M_{X^0} := \|X^0 \|^2_{\cH^{1}((0,1);\R^n)} + \vert \eta^0 - X(1,0)\vert^2.
\end{equation}
\end{thm}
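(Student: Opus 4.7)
The strategy is to build a Lyapunov function adapted to the $\cH^1$-regularity of the closed-loop solution and the controller error, derive an exponential dissipation inequality with a disturbance forcing term, and then convert the resulting $V$-estimate into a maximum-norm bound via Proposition~\ref{prop:maxH1}.

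\textit{Step 1 (Lyapunov candidate).} I would take
\[
V(t) := \beta_1 \!\int_0^1\! X^\top D^2 X\, \me^{-\mu z}\,dz + \beta_2 \!\int_0^1\! X_z^\top D^2 X_z\, \me^{-\mu z}\,dz + \beta_3 \,|\eta(t) - X(1,t)|^2,
\]
with the same positive weights $\beta_1,\beta_2,\beta_3$ and $D\in\cD_+^n$ that appear in $\Omega$ (possibly with an additional $\Lambda^{-1}$-weighting inside the integrals so that no $\Lambda$ remains after integration by parts below). Two features motivate this choice: the first two summands dominate $\|X(\cdot,t)\|^2_{\cH^1}$ so that Proposition~\ref{prop:maxH1} controls $\max_z |X(z,t)|^2$ by $V(t)$ (up to the boundary piece $|X(0,t)|^2$ that is handled separately via the boundary condition), and the third summand matches \eqref{eq:defMX0} so that $V(0) \le C\, M_{X^0}$.

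\textit{Step 2 (Time derivative).} Working first with smooth data and extending to $\cH^1$-regular solutions by the density argument noted after \eqref{eq:bCondXz}, I would differentiate each summand. For the first two, using $X_t = -\Lambda X_z$ and the analogous equation obtained by differentiating \eqref{eq:closed:loop:X:dynamics} in $z$, integration by parts in $z$ yields volume terms of sign $-\mu\,c_0\,V$ (negative definite since $D^2\Lambda$ is diagonal positive) together with boundary contributions at $z=0$ and $z=1$. For the third summand, setting $\xi(t):=\eta(t)-X(1,t)$ and using \eqref{eq:closed:loop:X:bd} together with \eqref{eq:contLina}, one finds $\dot\xi = -\alpha \xi + \Lambda X_z(1,t) + \alpha d(t)$.

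\textit{Step 3 (Boundary-term assembly).} This is the algebraic heart of the proof. Substituting $X(0,t) = (H+BK)X(1,t) + BK\,\xi(t)$ and \eqref{eq:bCondXz} (where $\dot\eta = -\alpha\eta + \alpha X(1,t) + \alpha d$) expands all boundary contributions as quadratic forms in $X(1,t),\eta(t),X_z(1,t)$ with linear cross terms in $d$. The contraction bound \eqref{eq:gainConda} yields $X(1)^\top (H{+}BK)^\top D^2(H{+}BK) X(1) \le \nu^2\,X(1)^\top D^2 X(1)$, which combined with the $\me^{-\mu}$ factor produced by the $z=1$ boundary term at $V_1$ accounts for the coefficient $\rho = \me^{-\mu}-\nu^2$ of the $(1,1)$-block of $\Omega$; the $(2,2)$- and $(3,3)$-blocks arise similarly by collecting the $\eta$- and $X_z(1)$-quadratic contributions coming respectively from the $\dot\xi$-identity and from $X_z(0)^\top D^2 X_z(0)$, while the off-diagonal entries $-\beta_1(G+Q)$ and $\beta_3 I + \alpha\beta_2 G$ track the cross products $X(1)\eta$ and $\eta X_z(1)$ that appear when these substitutions are made. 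Writing $\omega(t) := \col(X(1,t),\eta(t),X_z(1,t))$, the outcome is
\[
\dot V(t) \le -\lambda_0\, V(t) - \omega(t)^\top \Omega\,\omega(t) + 2\,\omega(t)^\top R\, d(t),
\]
with $R$ depending on $\alpha,\beta_i,H,B,K,D$ and $\lambda_0>0$ produced by the $-\mu$ volume terms.

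\textit{Step 4 (Dissipation inequality and DSS estimate).} Hypothesis \eqref{eq:gainCondb} ensures $\omega^\top\Omega\omega \ge \zeta|\omega|^2$, so Young's inequality absorbs a $\zeta/2$-fraction of this term against $2\omega^\top R d$, leaving $\dot V \le -\lambda V + C|d(t)|^2$ for positive $\lambda,C$. Grönwall then gives $V(t) \le \me^{-\lambda t} V(0) + (C/\lambda)\|d_{[0,t]}\|_\infty^2$. Applying Proposition~\ref{prop:maxH1} to $X(\cdot,t)$ and bounding $|X(0,t)|^2 = |(H+BK)X(1,t)+BK\xi(t)|^2$ by $V(t)$ via the trace bound $|X(1,t)|^2 \le C''\|X(\cdot,t)\|^2_{\cH^1}$, one obtains $\max_z |X(z,t)|^2 \le C'\, V(t)$; taking square roots yields \eqref{eq:maxEstGen} with $a=\lambda/2$, a constant $c$ depending on the system data, and a linear class-$\cK_\infty$ gain $\gamma$.

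\textit{Main obstacle.} The delicate work is Step~3: the boundary identity \eqref{eq:bCondXz} couples $X_z(0,t)$ with $\dot\eta(t)$, and consequently also with $d(t)$, so the disturbance enters the quadratic assembly through two independent routes and must be carefully isolated into the $R\,d$ term without polluting the $\omega^\top\Omega\omega$ form. Matching the resulting coefficients with the exact blocks $-\beta_1(G+Q)$, $\beta_3 I+\alpha\beta_2 G$, $(\rho D^2+Q+G+G^\top)\beta_2$ is essentially bookkeeping, but it is where all the structural assumptions (contraction of $H+BK$, high-gain damping $-\alpha\xi$, and spatial decay via $\me^{-\mu z}$) must be balanced simultaneously, and this is precisely what \eqref{eq:gainCondb} encodes.
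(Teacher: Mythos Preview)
Your proposal follows essentially the same route as the paper: the same three-term Lyapunov function (the paper does indeed insert the $\Lambda^{-1}$-weighting you anticipate, taking $P_1=\beta_1 D^2\Lambda^{-1}$ and $P_2=\beta_2\Lambda D^2$), the same dissipation inequality $\dot V\le -\sigma V + \chi|d|^2$, and the same final conversion through Proposition~\ref{prop:maxH1}.

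Two bookkeeping points are worth flagging. First, the auxiliary vector that produces exactly the matrix $\Omega$ of the hypothesis is $w=\col\big(X(1,t),\ \eta(t)-X(1,t),\ X_t(1,t)\big)$, not $\col\big(X(1,t),\eta(t),X_z(1,t)\big)$: the $(2,2)$-entry $2\alpha\beta_3-(\beta_1+\alpha^2\beta_2)Q$ is the coefficient of $|\eta-X(1)|^2$ (coming from $\dot V_3$ and the $BK(\eta-X(1))$ pieces of the boundary substitutions), and the $(3,3)$-entry is assembled in the $X_t(1,t)$ variable after replacing $\Lambda X_z(1,t)=-X_t(1,t)$; with your choice of coordinates the quadratic form you obtain is congruent to $\Omega$ but not equal to it, so the hypothesis \eqref{eq:gainCondb} would not apply verbatim. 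Second, in Step~4 the paper applies Proposition~\ref{prop:maxH1} to $DX(\cdot,t)$ rather than $X(\cdot,t)$, then uses \eqref{eq:gainConda} to bound $|DX(0,t)|^2\le \nu^2\max_z|DX(z,t)|^2+\|DF\|_2^2|\eta-X(1,t)|^2$ and absorbs the first term on the left with the factor $(1-\nu^2)^{-1}$; this avoids invoking a separate trace inequality and gives the $\cC^0$-bound directly in terms of $V$. These are refinements of execution, not of strategy.
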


\begin{rem}[DSS implies ISpS]
For $M_{X^0}$ given in \eqref{eq:defMX0}, we obtain
\begin{align*}
M_{X^0} & \le  \|X^0\|^2_{\cH^{1}((0,1);\R^n)} + \vert X^0(1) \vert^2 + \vert \eta^0 \vert^2 \\
& \le \max_{z\in[0,1]} \vert X^0(z) \vert^2 + \|X^0\|_{\cH^{1}((0,1);\R^n)}^2 + \vert \eta^0 \vert^2
\end{align*}
where we used the obvious relation that $\vert X^0(1) \vert \le \max_{z\in[0,1]} \vert X^0(z) \vert$. 
Substituting this bound on $M_{X^0}$ in \eqref{eq:maxEstGen}, our DSS estimate leads to
\begin{multline}
\max_{z\in[0,1]} \vert X(z) \vert \le c \,\me^{-at} \max_{z\in[0,1]} \vert X^0(z) \vert^2 + \gamma (\|d_{[0,t]} \|_\infty) \\+ c \,\me^{-at} \left( \|X^0\|^2_{\cH^1((0,1);\R^n)} + \vert \eta^0\vert^2 \right).
\end{multline}
This is a more conventional input-to-state practical stability notion in $\cC^0([0,1];\R^n)$ with respect to disturbance $d \in \cL^{\infty}([0,\infty);\R^n)$.
In particular, with $d \equiv 0$, we have practical stability of $X = 0$ in $\cC^0([0,1];\R^n)$ in the following sense: For every $\eps > 0$, there exists $\delta > 0$ such that the following implication holds
\[
\max_{z\in[0,1]} \vert X^0(z) \vert^2 \le \delta \ \Rightarrow \ \max_{z\in[0,1]} \vert X(z,t) \vert \le \eps + C
\]
where $C:=\left( \|X^0\|^2_{\cH^1((0,1);\R^n)} + \vert \eta^0\vert^2 \right)$. 
\end{rem}

\begin{rem}
It must be noted that, if the initial condition of the closed-loop system $X^0, \eta^0$ is chosen such that $\|X^0 \|_{\cH^{1}((0,1);\R^n)} = 0$ and $\eta^0 = 0$, then $M_{X^0} = 0$.
Indeed, since $X^0 \in \cH^1((0,1);\R^n)$ implies that $X^0$ is continuous, and $\|X^0\|^2_{\cL^2((0,1);\R^n)} = 0$ implies that $X^0 = 0$ almost everywhere on $[0,1]$, we must have $X(1,0) = 0$.
\end{rem}

\begin{rem}
In the statement of Theorem~\ref{thm:mainISS}, condition \eqref{eq:gainConda} requires $\inf_{D\in\cD_+^n} \| D (H+BK) D^{-1} \|_2 < 1$ which also appears in the more general context of nonlinear systems \cite{CoronBastin08} when analyzing stability with respect to $\cH^2$-norm. However, the condition \eqref{eq:gainCondb} is introduced in our work to compensate for the lack of proportional gain in the feedback law. It definitely restricts the class of systems that can be treated with our approach and relaxing this condition or obtaining different criteria is a topic of further investigation.
\end{rem}

\begin{rem}
At this moment, we do not have a precise characterization of the parameters of system~\eqref{eq:sysHyp} for which \eqref{eq:gainCond} admits a solution.
As a particular instance, assume that \eqref{eq:gainConda} holds with $K = 0$.
In that case, the matrix $\Omega$ simplifies greatly as $Q = G = 0$.
Using the Schur complement, one can immediately find the constants $\alpha, \beta_1, \beta_2,\beta_3$ that result in $\Omega$ being positive definite, and hence satisfying \eqref{eq:gainCondb}.
By applying the continuity argument for solutions of matrix inequalities with respect to parameter variations, the solution to \eqref{eq:gainCondb} will also hold for $K \neq 0$, but sufficiently small.
\end{rem}

The remainder of this section is devoted to the proof of Theorem~\ref{thm:mainISS} and is divided into several steps for the ease of reading.
In Section~\ref{sec:lyapBounds}, we construct a function $V:\cH^1((0,1);\R^n) \times \R^n \to \R_+$.
By computing the derivative of this function in Section~\ref{sec:lyapDissip}, an upper bound on $\dot V$ along the solutions of (\ref{eq:contLin})-(\ref{eq:closed:loop:X}) is obtained under condition \eqref{eq:gainCond} which yields
\begin{equation}
\dot V (X(t),\eta(t)) \le -\sigma V(X(t),\eta(t)) +\chi \vert d(t)\vert^2
\end{equation}
for some constant $\sigma, \chi > 0$.
We then combine this bound with Proposition~\ref{prop:maxH1} in Section~\ref{sec:maxEst} to obtain the DSS estimate \eqref{eq:maxEstGen}.

\subsection{Construction of the Lyapunov Function}\label{sec:lyapBounds}
The primary idea is to introduce a Lyapunov function and analyze its derivative with respect to time.
As a candidate, we choose $V: \cH^{1}((0,1);\R^n) \times \R^n \to \R_+$ given by
\begin{equation}\label{eq:defLyap}
V := V_1 + V_2 + V_3
\end{equation}
where $V_1 : \cH^{1}((0,1);\R^n) \rightarrow \R_+$ is defined as,
\[
V_1 (X) := \int_0^1 X(z)^\top P_1 X(z) e^{-\mu z} \,dz,
\]
where $P_1$ is a diagonal positive definite matrix that will be specified later.
Similarly, $V_2 : \cH^{1}((0,1);\R^n) \rightarrow \R_+$ is given by
\[
V_2 (X) := \int_0^1 \partial X(z)^\top P_2 \partial X(z) e^{-\mu z}\,dz,
\]
where $P_2$ is a diagonal positive definite matrix that will be specified later, and finally $V_3:\cH^{1}((0,1);\R^n) \times \R^n \rightarrow \R_+$ is given by
\[
V_3 (X,\eta)= (\eta - X(1))^\top P_3 (\eta - X(1)),
\]
where $P_3$ is a symmetric positive definite matrix that will be specified later.

It is evident that there exist constants $\underline c_P:= \min_{i = 1,2,3}\{\lambda_{\min}(P_i)\}$, $\overline c_P:=\max_{i = 1,2,3}\{\lambda_{\max}(P_i)\}$ such that, for all $X\in \cH^1((0,1);\R^n)$, and $\eta \in \R^n$,
\begin{equation}\label{eq:lyapPropBnd}
\underline c_P (\|X\|_{\cH^1((0,1);\R^n)}^2 + \vert \eta-X(1)\vert^2) \le V (X,\eta) \le 
\overline c_P (\|X\|_{\cH^1((0,1);\R^n)}^2 + \vert \eta-X(1)\vert^2).
\end{equation}

\subsection{Lyapunov Dissipation Inequality}\label{sec:lyapDissip}
We now derive the bound on $\dot V$ that was used in Section~\ref{sec:lyapBounds} to obtain the desired ISS estimate.
This is done by analyzing the time derivative of each of the three functions in the definition of the Lyapunov function.

{\em Analyzing $V_1$:} Using an integration by parts and recalling that $P_1$ is a diagonal positive definite matrix, 
the time derivative of $V_1$ yields
\begin{align*}
\dot V_1 & = \int_0^1 (\partial_t X^\top P_1 X + X^\top P_1 \partial_t X) e^{-\mu z}\, dz \\
& = -\int_0^1 (\partial_z X^\top \Lambda P_1 X + X^\top P_1 \Lambda \partial_z X) e^{-\mu z} \, dz \\
& = -[X^\top \Lambda P_1 X e^{-\mu z}]_0^1  - \mu \int_0^1 X(z,t)^\top  P_1 \Lambda X(z,t) e^{-\mu z} dz \\
& \le - e^{-\mu} \, X(1,t)^\top P_1 \Lambda X(1,t) + X(0,t)^\top P_1 \Lambda X(0,t) - \sigma_1 V_1,
\end{align*}
where $\sigma_1=\mu \lambda_{\min}(\Lambda) $ in which $\lambda_{\min}(\Lambda)$ denotes the minimal eigenvalue of $\Lambda$.

We now impose the boundary conditions by substituting the value of control $u$ given in \eqref{eq:closed:loop:X:bd} to get
\[
X(0,t) = (H+BK)X(1,t) + BK(\eta - X(1,t))
\]
which results in
\begin{align*}
\dot V_1 & \le - \sigma_1 V_1 - e^{-\mu} X(1,t)^\top \Lambda P_1 X(1,t) 
 + X(1,t)^\top (H+BK)^\top \Lambda P_1 (H+BK)  X(1,t) \\
& \quad + 2 X(1,t)^\top (H+BK )^\top \Lambda P_1  BK (\eta- X(1,t)) 
 + (\eta-X(1,t))^\top K^\top B^\top \Lambda P_1 B K (\eta - X(1,t)).
\end{align*}
Pick a diagonal positive definite matrix $D$ such that (\ref{eq:gainConda}) holds. We check that 
\begin{equation}\label{eq:boundNuD2}
(H+BK)^\top D^2 (H+BK) \leq \nu^2 D^2\ .
\end{equation} 
Let $P_1 = \beta_1 D^2\Lambda^{-1}$ for some $\beta_1 > 0$. With \eqref{eq:boundNuD2}, we thus get
\begin{multline}
\dot V_1  \le -\sigma_1 V_1 - \beta_1(e^{-\mu} - \nu^2) X(1,t)^\top D^2 X(1,t) \\
+ 2 \beta_1 X(1,t)^\top (H+BK)^\top  D^2 BK (\eta - X(1,t)) \\
+ \beta_1 (\eta - X(1,t))^\top K^\top B^\top D^2 BK (\eta - X(1,t)).
\end{multline}
We will see in the sequel that the dynamic controller is chosen so that the last term vanishes in the analysis of $V_3$.

{\em Analyzing $V_2$:} Repeating the same calculations as in the case of $\dot V_1$, we get
\begin{align*}
\dot V_2  &\le - e^{-\mu}X_z(1,t)^\top P_2 \Lambda X_z(1,t)  + X_z(0,t)^\top P_2 \Lambda X_z(0,t)  - \sigma_2 V_2\ ,
\end{align*}
where $\sigma_2=\sigma_1$.
Using \eqref{eq:closed:loop:X:dynamics} and \eqref{eq:bCondXz}, the boundary condition for $X_z$ is rewritten as
\begin{align*}
X_z(0,t) &= \Lambda^{-1}(H+BK) \Lambda X_z(1,t) 
 -\Lambda^{-1}BK \left[\dot \eta - X_t(1,t)\right].
\end{align*}
Let $\widetilde D := D\Lambda$, then using (\ref{eq:gainConda}) again
\[
\| \widetilde D \Lambda^{-1}(H+BK) \Lambda \widetilde D^{-1} \|_2 \le \nu,
\]
 and we choose $P_2 = \beta_2 \widetilde D^2 \Lambda^{-1} $ to obtain 
\begin{multline*}
\dot V_2  \le - \sigma_2 V_2 - \beta_2  (e^{-\mu} - \nu^2) X_z(1,t)^\top \widetilde D^2 X_z(1,t) \\
 - 2 \beta_2 X_z(1,t)^\top \Lambda (H+BK)^\top \Lambda^{-1}\widetilde D^2 \Lambda^{-1}BK (\dot \eta - X_t(1,t)) \\
 + \beta_2 (\dot \eta - X_t(1,t))^\top K^\top B^\top \Lambda^{-1} \widetilde D^2 \Lambda^{-1}  B K (\dot \eta - X_t(1,t)).
\end{multline*}
It is observed that $\Lambda^{-1} \widetilde D^2 \Lambda^{-1} = D^2$ because $\Lambda$ and $\widetilde D = D \Lambda$ are diagonal matrices, so that
\begin{multline*}
\dot V_2  \le - \sigma_2 V_2 - \beta_2  (e^{-\mu} - \nu^2) X_t (1,t)^\top D^2 X_t(1,t) \\
 + 2 \beta_2 X_t(1,t)^\top (H+BK)^\top D^2 BK (\dot \eta - X_t(1,t)) \\
 + \beta_2 (\dot \eta - X_t(1,t))^\top K^\top B^\top D^2 B K (\dot \eta - X_t(1,t)),
\end{multline*}
where we also used \eqref{eq:closed:loop:X:dynamics} to write $X_z(1,t) = -\Lambda^{-1} X_t(1,t)$.
Substitute $\eta$-dynamics from \eqref{eq:contLina} and let $F:=BK$, to get
\begin{multline*}
\beta_2^{-1}\dot V_2 \le -\beta_2^{-1} \sigma_2 V_2 - (e^{-\mu} - \nu^2) X_t(1,t)^\top D^2 X_t(1,t) 
- 2\alpha X_t(1,t)^\top (H+F)^\top D^2 F (\eta - X(1,t)) \\
+ 2\alpha X_t(1,t)^\top (H+F)^\top D^2 F d(t) 
- 2 X_t(1,t)^\top (H+F)^\top D^2 F  X_t(1,t) \\
+ \alpha^2 (\eta - X(1,t))^\top F^\top D^2 F (\eta - X(1,t)) 
 +2\alpha(\eta - X(1,t))^\top F^\top D^2 F X_t(1,t)\\
+ X_t(1,t)^\top F^\top D^2 F X_t(1,t)
+ \alpha^2d(t)^\top F^\top D^2 F d(t) \\
-2 \alpha^2 (\eta - X(1,t))^\top F^\top D^2 F d(t) 
-2 \alpha X_t(1,t)^\top F^\top D^2 F d(t).
\end{multline*}
The terms involving $d$ appear in $\dot V_2$ because the $\eta$-dynamics are driven by the output which includes disturbances. Cancelation of certain terms yields
\begin{multline*}
\beta_2^{-1}\dot V_2 \le -\beta_2^{-1} \sigma_2 V_2 - (e^{-\mu} - \nu^2) X_t(1,t)^\top D^2 X_t(1,t) 
- 2\alpha X_t(1,t)^\top H^\top D^2 F (\eta - X(1,t)) \\
+ 2\alpha X_t(1,t)^\top H^\top D^2 F d(t) 
- 2 X_t(1,t)^\top H^\top D^2 F  X_t(1,t) \\
+ \alpha^2 (\eta - X(1,t))^\top F^\top D^2 F (\eta - X(1,t)) 
- X_t(1,t)^\top F^\top D^2 F X_t(1,t)\\
+ \alpha^2d(t)^\top F^\top D^2 F d(t) 
-2 \alpha^2 (\eta - X(1,t))^\top F^\top D^2 F d(t).
\end{multline*}
One can use the Young's inequality for the last two terms to decouple the disturbance from $X_t$ and $(\eta-X(1,t))$, that is, for every $\overline \zeta > 0$, we have
\[
2\alpha^2\beta_2(\eta - X(1,t))^\top F^\top D^2 F d(t) \le \overline \zeta |\eta - X(1,t)|^2 
+ \alpha^4\beta_2^2\frac{\|F^\top D^2 F \|_2^2}{\overline\zeta} |d(t)|^2
\]
\[
2\alpha\beta_2X_t(1,t)^\top H^\top D^2 F d(t) \le \overline\zeta |X_t(1,t)|^2 
+ (\alpha\beta)^2\frac{\|H^\top D^2 F \|_2^2}{\overline\zeta} |d(t)|^2.
\]

{\em Analyzing $V_3$:} Choose $P_3 = \beta_3 I$, and substitute the dynamics of $\eta$ from \eqref{eq:contLina} in the expression of $\dot V_3$ to obtain
\begin{align*}
\dot V_3 & = 2\,\beta_3 (\eta(t)-X(1,t))^\top(\dot \eta(t) - X_t(1,t)) \\
& = -2\alpha\beta_3\,|(\eta(t)-X(1,t))|^2 
 - 2 \beta_3 (\eta(t) -X(1,t))^\top X_t(1,t) 
 + 2 \alpha\beta_3 (\eta(t) -X(1,t))^\top d(t).
\end{align*}
Once again, Young's inequality is used to obtain, $\forall \, \overline \zeta > 0$
\[
2 \alpha\beta_3 (\eta(t) -X(1,t))^\top d(t) \le \frac{\overline \zeta}{2} \vert \eta(t) - X(1,t) \vert ^2 
+ \frac{2 (\alpha \beta_3)^2}{\overline \zeta} \vert d(t)\vert^2,
\]
which further yields
\begin{multline*}
\dot V_3 \le - \left(2 \alpha \beta_3 + \frac{\overline \zeta}{2} \right) \left\vert (\eta(t)-X(1,t)) \right\vert^2 
- 2 \beta_3 (\eta(t) -X(1,t))^\top X_t(1,t) \\
+ \overline \zeta |(\eta(t)-X(1,t))|^2 + \frac{2 (\alpha \beta_3)^2}{\overline \zeta} \vert d(t)\vert^2.
\end{multline*}

{\em Combining $\dot V_1, \dot V_2, \dot V_3$:}
By introducing the vector $w$ as
\[
w(t):= (X(1,t)^\top, (\eta(t)- X(1,t))^\top , X_t^\top(1,t))^\top,
\]
one can massage the terms in the expressions for $\dot V_i$, $i=1,2,3$
to get
\[
\dot V \le -\sigma_1V_1 - \sigma_2 V_2 - \frac{\overline\zeta}{2} V_3 - w^\top \Omega w
+ \overline \zeta w^\top w + \chi |d(t)|^2
\]
where the constant $\chi$ is given by
\begin{equation}\label{eq:defChi}
\chi := \alpha^4\beta_2^2\frac{\|F^\top D^2 F \|_2^2}{\overline\zeta} + (\alpha\beta)^2\frac{\|H^\top D^2 F \|_2^2}{\overline\zeta} 
+ \alpha^2 \|F^\top D^2 F \| + \frac{2 (\alpha \beta_3)^2}{\overline \zeta}.
\end{equation}
By choosing $\overline\zeta = \zeta$, where $\zeta$ satisfies \eqref{eq:gainCondb}, we obtain
\begin{equation}\label{eq:negLyap}
\dot V (X(t),\eta(t)) \le -\sigma V(X(t),\eta(t)) +\chi \vert d(t)\vert^2
\end{equation}
with $\sigma := \min \left\{\sigma_1, \sigma_2,\frac{\zeta}{2}\right\}$.

\subsection{Obtaining the DSS Estimate}\label{sec:maxEst}
For the $D \in \cD_n^+$ satisfying \eqref{eq:gainConda}, we apply the result of Proposition~\ref{prop:maxH1} to the function $DX(\cdot,t)$ to obtain the following estimate, for each $t \ge 0$:
\begin{equation}\label{eq:bndDX0}
\max_{z\in[0,1]} \vert DX(z,t) \vert^2 \le \vert DX(0,t) \vert^2 + \| DX(\cdot,t) \|_{\cH^{1}((0,1);\R^n)}.
\end{equation}
The boundary condition \eqref{eq:closed:loop:X:bd}, with $D \in \cD_+^n$, can be written as
\[
DX(0,t)= D(H+BK)D^{-1} DX(1,t) + DBK(\eta-X(1,t))
\]
which using Young's inequality and letting $F= BK$ yields
\begin{align*}
\vert D X(0,t) \vert^2 & \le \nu^2 \vert D X(1,t) \vert^2 + \|DF\|_2^2 (\eta - X(1,t))^2 \\
& \le \nu^2 \max_{z\in[0,1]} \vert D X(z,t) \vert^2 + \|DF\|_2^2 (\eta - X(1,t))^2.
\end{align*}
Substituting the last equation in \eqref{eq:bndDX0}, we get
\[
\max_{z\in[0,1]} \vert DX(z,t) \vert^2 \le
\frac{1}{1-\nu^2} \Big(\| DX(\cdot,t) \|^2_{\cH^{1}((0,1);\R^n)} + \|DF\|_2^2 (\eta - X(1,t))^2 \Big).
\]
Let us introduce the constant $c_D$ as
\[
c_D := \frac{\max\{\|D\|_2^2, \|DF\|_2^2\}}{\lambda_{\min}(D)^2(1-\nu^2)}
\]
then, for each $t \ge 0$:
\begin{align}
\max_{z\in[0,1]} \vert X(z,t) \vert^2 & \le c_D \left( \| X(\cdot,t) \|^2_{\cH^{1}((0,1);\R^n)} + \vert \eta - X(1,t)\vert^2 \right) \notag\\
& \le \frac{c_D}{\underline c_P} V(X(\cdot,t), \eta(t)) \label{eq:maxTimeBnd}
\end{align}
where we recall that $\underline c_P:= \min_{i = 1,2,3}\{\lambda_{\min}(P_i)\}$.

Next, by integrating \eqref{eq:negLyap}, we get
\begin{equation}\label{eq:lyapTimeBnd}
V(X(t),\eta(t)) \le \me^{-\sigma t} V(X(0),\eta(0)) + \frac{\chi}{\sigma} (\|d_{[0,t]}\|_\infty^2).
\end{equation}
To obtain the desired DSS estimate, we substitute the bound \eqref{eq:lyapTimeBnd} in \eqref{eq:maxTimeBnd} to get
\[
\max_{z\in[0,1]} \vert X(z,t) \vert^2 \le \frac{c_D \overline c_P}{\underline c_P} \me^{-\sigma \, t } \Big(\|X^0\|_{\cH^{1}((0,1);\R^n)}^2  +\vert \eta^0 - X(1,0)\vert^2 \Big) 
+ \frac{c_D\chi}{\underline c_P\sigma} \|d_{[0,t]}\|_\infty^2
\]
which is the desired DSS estimate~\eqref{eq:maxEstGen} with $M_{X^0}$ given in \eqref{eq:defMX0}.
This concludes the proof of Theorem~\ref{thm:mainISS}.

\subsection{ISS Estimate for the Closed Loop}
The DSS estimate \eqref{eq:maxEstGen} differs from the classical ISS estimate in the sense that we obtain a bound on the norm of $X(\cdot,t)$ in terms of the initial condition that depends on the state of the dynamical controller $\eta^0$.
However, if we consider the combined state of the closed-loop system $(X,\eta)$, then we can obtain a more conventional ISS estimate with this augmented state.
To see this, we observe that
\[
\vert \eta(t) \vert^2 \le 2 \vert \eta(t) - X(1,t) \vert^2 + 2 \max_{z\in[0,1]} \vert X(z,t)\vert^2
\]
and hence, from \eqref{eq:maxTimeBnd}, we have
\[
\vert \eta(t) \vert^2 + \max_{z\in[0,1]} \vert X(z,t)\vert^2 \le \frac{2+c_D}{\underline c_P} V(X(\cdot,t),\eta(t)).
\]
Once again, using the bound \eqref{eq:lyapTimeBnd}, we get
\begin{multline}
\|(X(\cdot,t),\eta(t))\|_{\cC^0([0,1];\R^n)\times \R^n}^2 \le C_1 \|d_{[0,t]}\|_\infty^2 \\
+C_2 \me^{-\sigma t} \left(2 \|(X^0,\eta^0)\|_{\cC^0([0,1];\R^n)\times \R^n}^2 + \|X^0\|^2_{\cH^1((0,1);\R^n)}\right)
\end{multline}
where $C_1:=\frac{(2+c_D)\chi}{\underline c_P\sigma}$ and $C_2:=\frac{(2+c_D)\overline c_P}{\underline c_P}$.
This is indeed a conventional ISS estimate for the closed-loop system with the state $(X,\eta)$, and $d$ viewed as an external disturbance.

\subsection{Effect of Vanishing Disturbance}\label{sec:vanish}
We now want to study the asymptotic behavior of the state $(X,\eta)$ when the disturbance $d$ is bounded and $d(t) \to 0$ as $t \to \infty$. For finite-dimensional systems, ISS estimates and the semigroup property of the solution set ensure that the corresponding state trajectories converge to zero asymptotically as $d$ converges to zero. We observe the same qualitative behavior with our DSS estimates.

From Remark~\ref{rem:sgProp}, where $x = (X,\eta)$, we recall that the solutions to the closed-loop system possess the semigroup property. If the estimate \eqref{eq:maxEstGen} holds, then for every $t > s \ge 0$, we have
\begin{equation}\label{eq:sgEstInf}
\max_{z\in [0,1]} \vert X(z,t) \vert \le c\, \me^{-a(t-s)} M_{X^s} + \gamma \left( \|d_{[s,t]}\|_\infty \right)
\end{equation}
where $M_{X^s} :=  \|X(s) \|^2_{\cH^{1}((0,1);\R^n)} + \vert \eta(s) - X(1,s)\vert^2$. From \eqref{eq:lyapPropBnd} and \eqref{eq:lyapTimeBnd}, it holds that $M_{X^s}$ is bounded by $\overline{c_P}M_{X^0} + \frac{\overline{c_P}\chi}{\sigma} \| d_{[0,s]} \|_{\infty}^2$.
Thus, in \eqref{eq:sgEstInf}, if $d(t) \rightarrow 0$ as $t \rightarrow \infty$, then by taking $s = t/2$, we see that $\max_{z\in [0,1]} \vert X(z,t) \vert \to 0$ as $t \to \infty$.

\section{Quantized Control}\label{sec:quant}

We are next interested in applying our results to study the stabilization of \eqref{eq:sysHyp}, where the measurement $X(1,t) \in \R^n$ is quantized, and cannot be transmitted to the control precisely.
In particular, $\R^n$-valued measurement $X(1,\cdot)$ is quantized using a finite set of alphabets, and hence the disturbances fed to the controller result from quantization error.
By working with uniform quantizer, we provide upper bounds on the number of symbols which result in the DSS estimate \eqref{eq:maxEstGen} with respect to a bounded quantization error.

\subsection{Description of the Quantizer}

To define a quantizer, we first specify a set of finite alphabets $\cQ:= \{q_0,q_1,q_2, \dots, q_N\}$, with $N$ chosen as an odd positive integer.
A quantizer with sensitivity $\Delta_q > 0$, and range $M_q>0$, is then a function $q:\R^n \rightarrow \cQ$ having the property that
\begin{equation}
\vert q (x) - x \vert_\infty \le \Delta_q \quad \text{\bf if} \quad \vert x \vert_\infty \le M_q
\end{equation}
and the overflow condition holds:
\begin{equation}
|q(x)|_\infty \ge M_q - \Delta_q \quad \text{\bf if} \quad \vert x \vert_\infty > M_q,
\end{equation}
where for $x:=\col(x_1, \dots, x_n) \in \R^n$, we used the notation $\vert x \vert_\infty := \max_{1 \le i \le n} |x_i|$.
Such a function $q$ defines what is called a {\em finite-rate uniform quantizer}.
In other words, within the space $\R^n$, where the measurements of $X(1,\cdot)$ take values, we take a cube with each side having length $2M_q$, and partition it uniformly in $N$ regions. 
Each of these regions is identified with a symbol $q_i$ from the set $\cQ$, $i \in \{1,\dots,N\}$.
If $\vert X(1,t) \vert_\infty \le M_q$, the controller receives a valid symbol $q_1, \dots, q_N$, and knows the variable $X(1,t)$, modulo the error due to sensitivity of the quantizer $\Delta_q$.
When the measurements are out of the range of the quantizer, that is, $\vert x \vert_\infty > M_q$, then the quantizer just sends an out of bounds flag $q_0$ and no upper bound on the error between $X(1,t)$ and its quantized value can be obtained in that case.

The cardinality of the set $\cQ$, or the number of regions, are determined by the ratio between the range and the sensitivity of the quantizer $M_q\slash \Delta_q$. This ration defines the rate at which the information is communicated by the quantizer on average. The basic idea of the quantized control in finite-dimensional systems is to show that the state of the system converges to a certain ball around the origin if this rate is sufficiently large (to dominate the most unstable mode) \cite{NairFagn07}.
In the same spirit, we derive a lower bound on the ratio $M_q\slash \Delta_q$ which is required to achieve practical stability in the presence of quantization errors.

\begin{rem}
Because the parameters $\Delta_q$ and $M_q$ remain constant in the definition of $q$, we are limiting ourselves to the case of static quantizers in this paper, that is, which results in a bounded measurement error determined by the sensitivity of the quantizer, if it can be ensured that $\vert X(1,t)\vert_\infty \le M_q$.
This is in contrast to the dynamic quantizers proposed in \cite{Libe03Aut} where the parameters $\Delta_q$ and $M_q$ are also updated while keeping their ratio constant, so that asymptotic stability of the origin could be achieved.
\end{rem}

\subsection{Stability Result with Quantized Control}

With quantized measurements, the controller \eqref{eq:contLin} takes the form
\begin{subequations}\label{eq:quantCont}
\begin{align}
\dot \eta (t) & = -\alpha \, \eta(t) + \alpha \, q(X(1,t)) \label{eq:quantConta}\\
u(t) & = K \eta(t).
\end{align}
\end{subequations}
By writing $q(X(1,t)) = X(1,t) + q(X(1,t)) - X(1,t)$, and letting $d_q(t):=q(X(1,t)) - X(1,t)$, we are indeed in the same setup as earlier with $y(t) = q(X(1,t))$. Here, $d_q$ is such that
\[
\vert d_q \vert \le \sqrt{n} \, \vert d_q \vert_\infty \le \sqrt{n} \, \Delta_q, \quad \text{if }\vert X(1,t) \vert_\infty \le M_q.
\]

\begin{thm}\label{thm:quant}
Consider the closed-loop system \eqref{eq:sysHyp} and \eqref{eq:quantCont}, and assume that the conditions \eqref{eq:gainConda} and \eqref{eq:gainCondb} hold.
Also, suppose that the initial conditions $X^0$ and $\eta^0$ satisfy
\begin{equation}\label{eq:boundIC}
\, V (X^0, \eta^0) \le \frac{\underline c_P}{c_D}M_q^2
\end{equation}
where $V(X,\eta)$ is defined in \eqref{eq:defLyap}.
With the constants  $\sigma,\chi$ appearing in \eqref{eq:negLyap}, if the quantizer is designed such that
\begin{equation}\label{eq:condQuant}
\frac{M_q^2}{\Delta_q^2} > \frac{n c_D \chi }{\underline c_P\sigma}.
\end{equation}
Then the following items hold:
\begin{itemize}
\item The output $X(1,t)$ remains within the range of the quantizer for all $t \ge 0$, that is,
\begin{equation}\label{eq:maxQuantBndX1}
\vert X(1,t) \vert_\infty \le M_q, \qquad \forall \, t \ge 0.
\end{equation}
\item The state of the system remains ultimately bounded in $\cC^0$-norm, that is, there exists $T$ such that for all $t \ge T$
\begin{equation}\label{eq:evQuantBndX}
\max_{z\in[0,1]} \vert X(z,t) \vert^2 \le \gamma_\eps(\Delta_q)
\end{equation}
where $\gamma_\eps$ is a class $\cK_\infty$ function
\[
\gamma_\eps(s) := \frac{n\chi c_D}{\sigma\underline c_P} s^2 (1+\eps),
\]
and $\eps > 0$ can be arbitrarily small.
\end{itemize}
\end{thm}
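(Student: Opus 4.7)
\noindent\emph{Proof plan.} The plan is a bootstrap argument that exploits Theorem~\ref{thm:mainISS} with $d$ replaced by the quantization error $d_q(t) := q(X(1,t)) - X(1,t)$. As long as $|X(1,t)|_\infty \le M_q$, the quantizer does not overflow, so $|d_q(t)| \le \sqrt{n}\,\Delta_q$, and the closed-loop Lyapunov analysis underlying Theorem~\ref{thm:mainISS} applies. The objective is to show that assumption~\eqref{eq:boundIC} together with~\eqref{eq:condQuant} prevents $X(1,\cdot)$ from ever leaving the range of the quantizer; item~(ii) will then follow from an asymptotic reading of the same estimate.

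The first step is to integrate the dissipation inequality~\eqref{eq:negLyap} in a sharper form than that used in the proof of Theorem~\ref{thm:mainISS}: on any interval $[0,t]$ on which $|X(1,s)|_\infty \le M_q$ for $s\in[0,t]$, Gr\"onwall-type integration gives
\[
V(X(t),\eta(t)) \le e^{-\sigma t}\, V(X^0,\eta^0) + \frac{\chi\,(1-e^{-\sigma t})}{\sigma}\,\|(d_q)_{[0,t]}\|_\infty^2.
\]
Combining this with~\eqref{eq:maxTimeBnd}, the initial bound~\eqref{eq:boundIC}, and $|d_q|^2 \le n\,\Delta_q^2$ yields
\[
\max_{z\in[0,1]} |X(z,t)|^2 \le e^{-\sigma t}\, M_q^2 + (1-e^{-\sigma t})\,\frac{n c_D \chi}{\underline c_P\, \sigma}\,\Delta_q^2.
\]
Setting $\theta := \frac{n c_D \chi \Delta_q^2}{\underline c_P\, \sigma\, M_q^2}$, condition~\eqref{eq:condQuant} reads $\theta < 1$, and the right-hand side above simplifies to the convex combination $[\,e^{-\sigma t}(1-\theta)+\theta\,]\,M_q^2 \le M_q^2$.

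The second step is the bootstrap itself. Define $t^* := \sup\bigl\{t\ge 0 : |X(1,s)|_\infty \le M_q \text{ for all } s\in[0,t]\bigr\}$. The initial bound~\eqref{eq:boundIC} combined with~\eqref{eq:maxTimeBnd} gives $|X^0(1)|_\infty \le \max_z|X^0(z)| \le M_q$, and the continuity of $t \mapsto X(1,t)$ granted by Theorem~\ref{thm:sol} forces $t^* > 0$. The estimate of Step~1 is valid on $[0,t^*)$ and shows $|X(1,t)|_\infty \le M_q$ there; if $t^*$ were finite, continuity would extend that bound past $t^*$, contradicting its maximality. Hence $t^* = \infty$, which is~\eqref{eq:maxQuantBndX1}. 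Item~(ii) then follows by choosing $T = \sigma^{-1}\log\bigl(\tfrac{1-\theta}{\theta\,\eps}\bigr)$, so that for $t \ge T$ the transient term is at most $\theta\eps\, M_q^2$ and the Step~1 bound becomes $\max_z|X(z,t)|^2 \le \theta(1+\eps)M_q^2 = \gamma_\eps(\Delta_q)$.

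The only substantive subtlety, and the place I expect to need the most care, is the first step: the looser constant $\chi/\sigma$ used in the proof of Theorem~\ref{thm:mainISS} would yield a bound exceeding $M_q^2$ at $t=0$ and thus break the bootstrap, even though both forms give the same asymptotic value. Working with the sharper $(1-e^{-\sigma t})$ factor, which also makes the convex-combination calculation in Step~1 immediate, is what aligns the estimate with~\eqref{eq:condQuant} and permits the contradiction argument to go through.
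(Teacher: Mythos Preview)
Your plan is correct and yields the result, but it follows a different line than the paper's proof. The paper argues via forward invariance of two sublevel sets of $V$: it sets $\cS_M := \{V \le \frac{\underline c_P}{c_D}M_q^2\}$ and $\cS_\Delta := \{V \le \frac{n\chi}{\sigma}\Delta_q^2(1+\eps)\}$, observes that condition~\eqref{eq:condQuant} gives $\cS_\Delta \subset \cS_M$, and shows directly from the sign of $\dot V$ that trajectories cannot leave $\cS_M$ (because $\dot V < 0$ on $\cS_M\setminus\cS_\Delta$) and must enter $\cS_\Delta$ in finite time. No Gr\"onwall integration is performed; the argument is purely qualitative and does not produce an explicit $T$. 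Your approach instead integrates the dissipation inequality with the sharp factor $(1-\me^{-\sigma t})$, recognizes the resulting bound as a convex combination of $M_q^2$ and $\theta M_q^2$, and bootstraps on the output constraint $|X(1,\cdot)|_\infty \le M_q$. This is slightly more hands-on and has the advantage of delivering the explicit time $T = \sigma^{-1}\log\frac{1-\theta}{\theta\eps}$; the paper's level-set argument is a bit cleaner conceptually and avoids the need for the sharpened integral form that you (correctly) flag as essential.

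One point in your sketch deserves more care. The sentence ``continuity would extend that bound past $t^*$'' is not, by itself, a contradiction: a non-strict inequality $|X(1,t^*)|_\infty \le M_q$ does not force the bound to persist beyond $t^*$. What makes the argument work is that for every $t>0$ in $[0,t^*)$ your Step~1 bound is \emph{strictly} below $M_q^2$, since $\me^{-\sigma t}(1-\theta)+\theta < 1$ whenever $t>0$ and $\theta<1$. Passing to the limit gives $\max_z|X(z,t^*)|^2 < M_q^2$ strictly (provided $t^*>0$), hence $|X(1,t^*)|_\infty < M_q$, and \emph{then} continuity furnishes the extension and the contradiction. The companion edge case $t^*=0$ is handled the same way: the initial bound and the strict decrease $\dot V(0) < 0$ (again from $\theta<1$) force $t^*>0$. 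You should make this strictness explicit when you write the proof out; as phrased, the bootstrap step reads as circular.
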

 

\begin{myproof}{of Theorem~\ref{thm:quant}}
In the light of condition~\eqref{eq:condQuant}, fix $\varepsilon > 0$ such that
\begin{equation}\label{eq:choiceEps}
\frac{n \chi}{\sigma} \Delta_q^2 (1+\varepsilon) < \frac{\underline c_P M_q^2}{c_D}.
\end{equation}
To proceed with the proof, we introduce two regions in the space $\cH^1((0,1);\R^n) \times \R^n$:
\[
\cS_M := \left\{ (X,\eta) \, \vert \, V(X,\eta) \le \frac{\underline c_P}{c_D}M_q^2 \right\}
\]
\[
\cS_\Delta := \left\{ (X,\eta) \, \vert \, V(X,\eta) \le \frac{n \chi}{\sigma} \Delta_q^2 (1+\varepsilon) \right\}.
\]
Because of \eqref{eq:choiceEps}, $\cS_\Delta$ is strictly contained inside $\cS_M$. We claim that the following two statements hold:
\begin{itemize}[leftmargin=4em]
\item[\em Claim~1:] If, for some $t_0 \ge 0$, $(X(t_0), \eta(t_0)) \in \cS_M \setminus \cS_\Delta $, then there exists a time $T_{\eps} \ge t_0$, such that $(X(T_{\eps}), \eta(T_{\eps})) \in \cS_\Delta$.
\item[\em Claim~2:] The set $\cS_M$ and $\cS_\Delta$ are forward invariant.
\end{itemize}
It is seen that the result of Theorem~\ref{thm:quant} holds because of these two claims. Since $\cS_M$ is invariant, and the initial condition $(X^0,\eta^0) \in \cS_M$ due to \eqref{eq:boundIC}, it follows that $(X(t),\eta(t))\in \cS_M$. We now invoke the inequality~\eqref{eq:maxTimeBnd} to observe that
\begin{equation}\label{eq:bndTrace}
\vert X(1,t) \vert_\infty^2 \le \vert X(1,t) \vert^2 \le \frac{c_D}{\underline c_P} \, V(X(\cdot,t),\eta(t))
\end{equation}
for all $t \ge 0$, which \eqref{eq:maxQuantBndX1} since $(X(t),\eta(t)) \in \cS_M$. To see that \eqref{eq:evQuantBndX} holds, it follows from Claim~1 and Claim~2 that for $t \ge T_\eps$, $(X(t),\eta(t)) \in \cS_\Delta$, and hence we have the desired bound on $\max_{z\in[0,1]}\vert X(z,t) \vert^2$ by making use of \eqref{eq:maxTimeBnd}.

{\em Proof of Claim~1:}
For $(X,\eta) \in \cS_M \setminus \cS_\Delta$, we compute $\dot V(X,\eta)$ along the closed-loop trajectories.
Because the measurement disturbance in $X(1,t)$ results from quantization error $d_q$, the derivative of the Lyapunov function in \eqref{eq:negLyap} satisfies
\[
\dot V(X,\eta) \le -\sigma V(X,\eta) + \chi \, d_q^\top d_q.
\]
For the region $\cS_M \setminus \cS_\Delta$, and for the chosen $\varepsilon > 0$,
\[
\frac{n \chi}{\sigma} \Delta_q^2 (1+\varepsilon) \le V (X,\eta) \le \frac{\underline c_P}{c_D}M_q^2.
\]
Also, if $(X, \eta) \in \cS_M$, then using \eqref{eq:bndTrace}, $\vert X(1)\vert_\infty \le \vert X(1)\vert \le M_q$ implying that $\vert d_q (t)\vert^2 \le n\Delta_q^2$, and hence
\begin{equation}\label{eq:epsNegLyap}
\dot V (X,\eta) \le -\varepsilon \frac{n \chi}{\sigma} \Delta_q^2.
\end{equation}
Thus, $V$ decreases strictly in the region $\cS_M\setminus\cS_\Delta$. Hence, if for some $t_0$, $(X(t_0),\eta(t_0)) \in \cS_M \setminus \cS_\Delta$, there exists a finite time $T_\eps \ge t_0$, such that $(X(T_\eps), \eta(T_\eps)) \in \cS_\Delta$.

{\em Proof of Claim~2:} Since $\cS_\Delta \subset \cS_M$, and every trajectory starting in $\cS_M\setminus\cS_\Delta$ reaches $\cS_\Delta$ due to Claim~1, it suffices to prove the forward invariance of $\cS_\Delta$ to establish the claim. Assume, for the sake of contradiction, that $\cS_\Delta$ is not forward invariant. Let $t_1$ be the {\em first} time instant such that
\[
V(X(t_1),\eta(t_1)) > \frac{n \chi}{\sigma} \Delta_q^2 (1+\varepsilon)
\]
Therefore, $(X(t),\eta(t)) \in \cS_M\setminus \cS_\Delta$ for each $t$ in a sufficiently small neighborhood of $t_1$. Hence, the inequality \eqref{eq:epsNegLyap} holds for $(X(t),\eta(t))$, for each $t$ near $t_1$. Thus, the absolutely continuous function $V(X(\cdot),\eta(\cdot))$ is negative definite in a neighborhood of $t_1$. Thus, $V(X(t),\eta(t)) > V(X(t_1),\eta(t_1))$ for some $t < t_1$. This contradicts the minimality of $t_1$, and hence the claim holds.
\end{myproof}

\subsection{Example}\label{sec:exQuant}

\begin{figure*}[t!]
    \centering
    \begin{subfigure}[t]{0.32\textwidth}
        \centering
        \includegraphics[width=\linewidth]{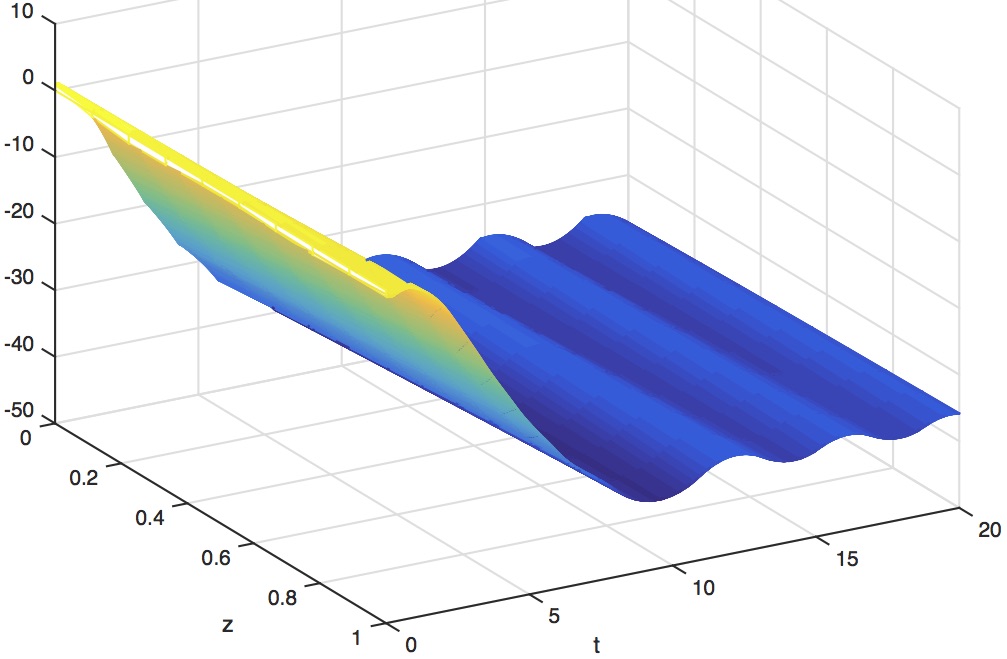}
        \caption{Evolution of $X_1$}
    \end{subfigure}%
    ~ 
    \begin{subfigure}[t]{0.32\textwidth}
        \centering
        \includegraphics[width=\linewidth]{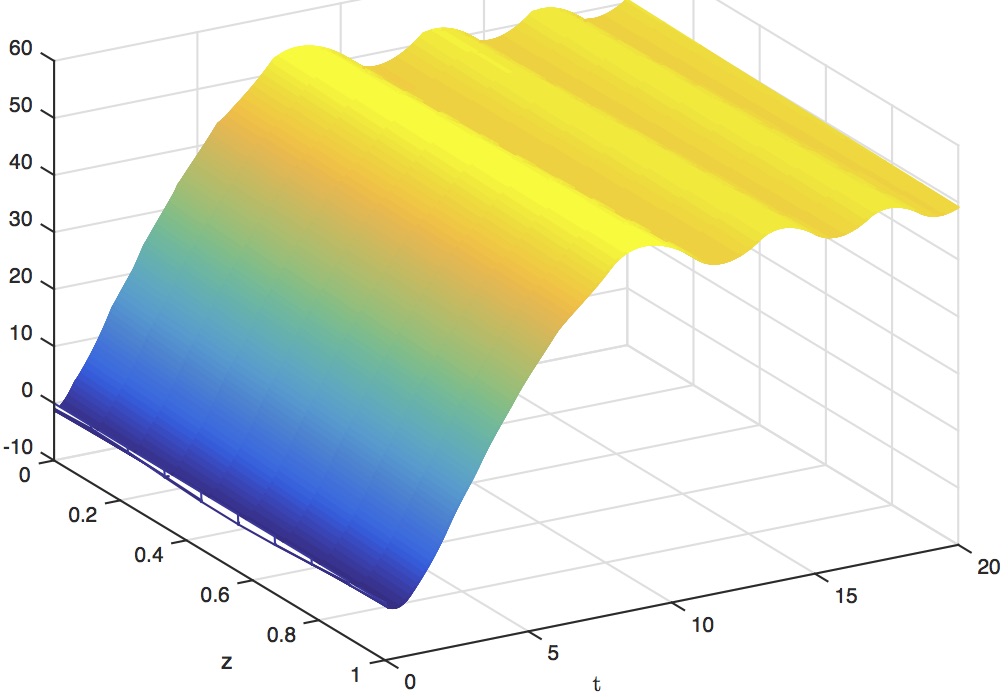}
        \caption{Evolution of $X_2$}
    \end{subfigure}
    ~
    \begin{subfigure}[t]{0.32\textwidth}
        \centering
        \includegraphics[width=\linewidth, height=4cm]{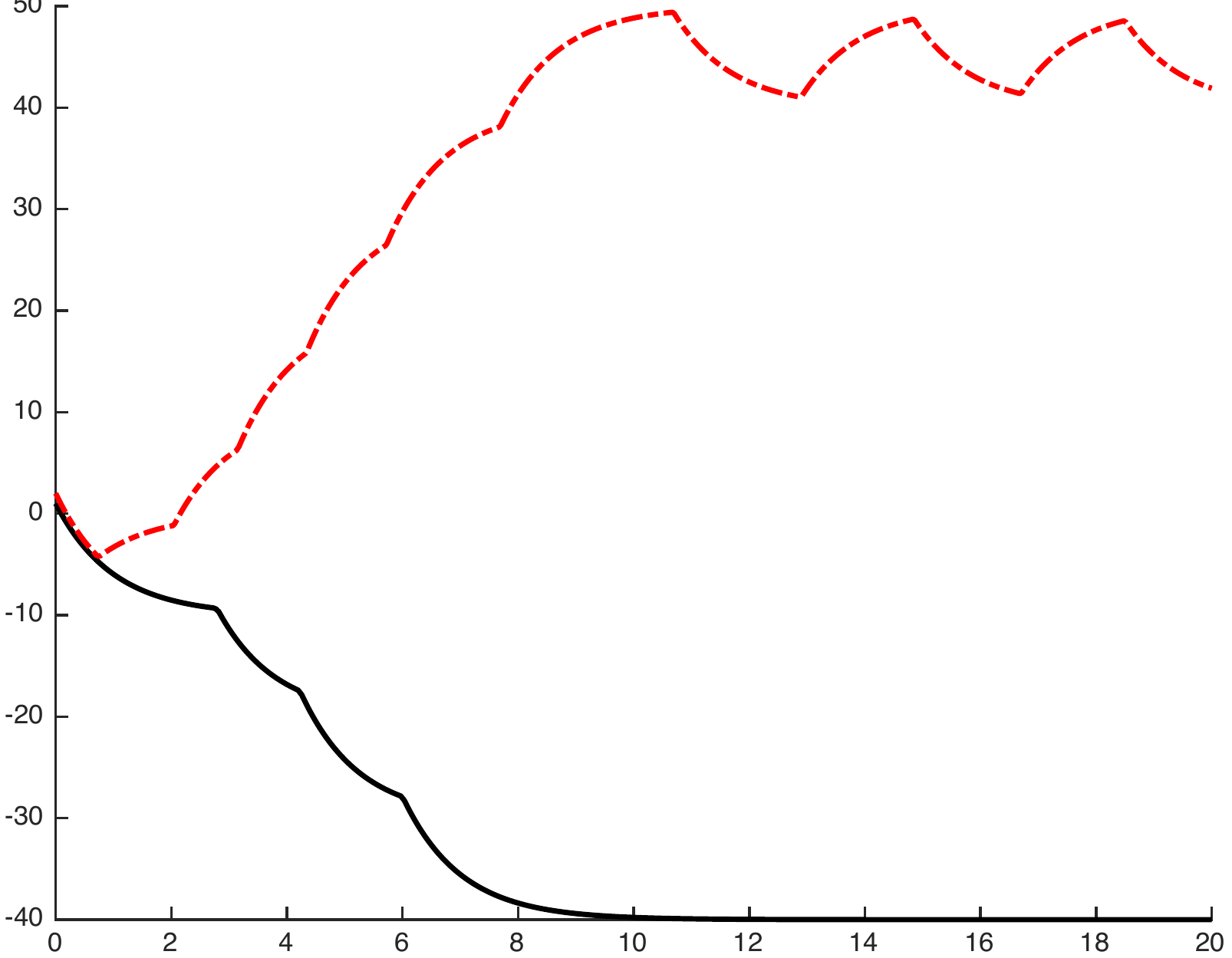}
        \caption{Black: $\eta_1$, Red: $\eta_2$}
    \end{subfigure}    
    \caption{Evolution of closed-loop trajectories with $\ell=0.1$.}
    \label{fig:ell005}
\end{figure*}

\begin{figure*}[t!]
    \centering
    \begin{subfigure}[t]{0.32\textwidth}
        \centering
        \includegraphics[width=\linewidth]{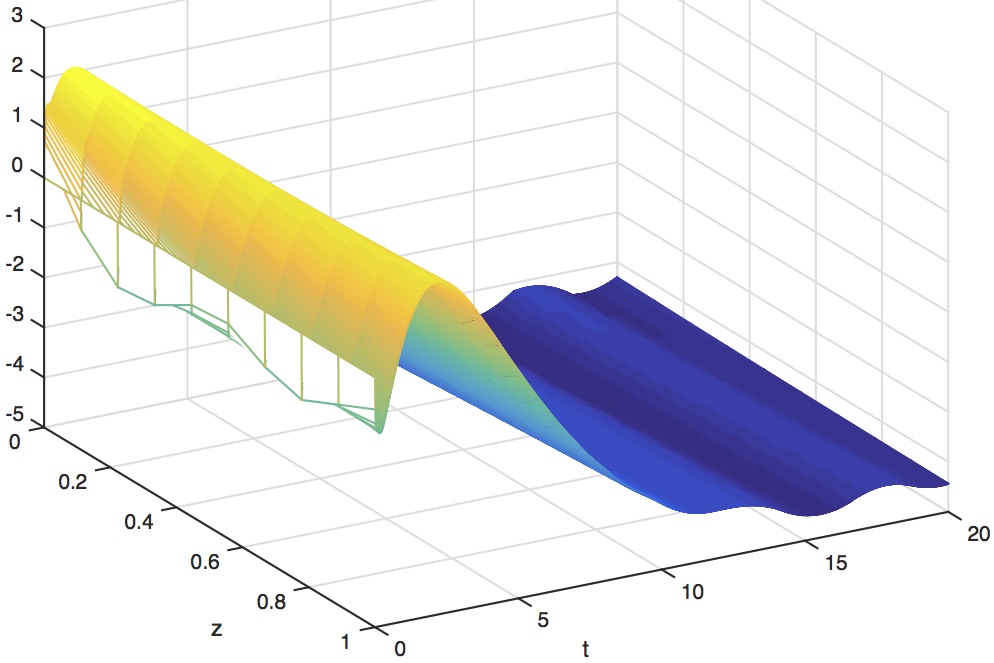}
        \caption{Evolution of $X_1$}
    \end{subfigure}%
    ~ 
    \begin{subfigure}[t]{0.32\textwidth}
        \centering
        \includegraphics[width=\linewidth]{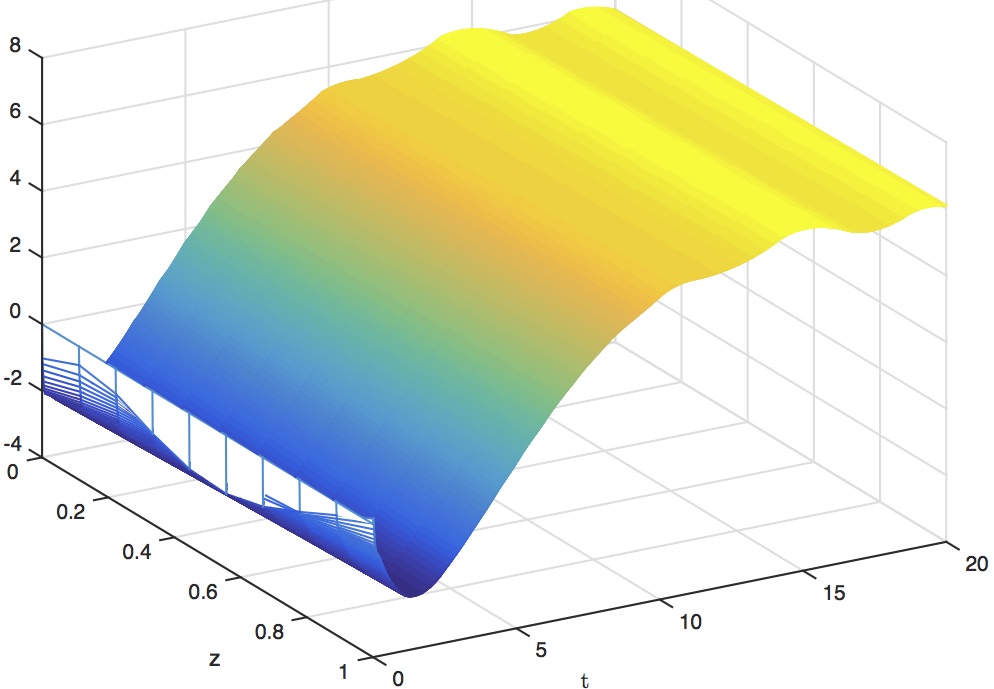}
        \caption{Evolution of $X_2$}
    \end{subfigure}
    ~
    \begin{subfigure}[t]{0.32\textwidth}
        \centering
        \includegraphics[width=\linewidth, height=4cm]{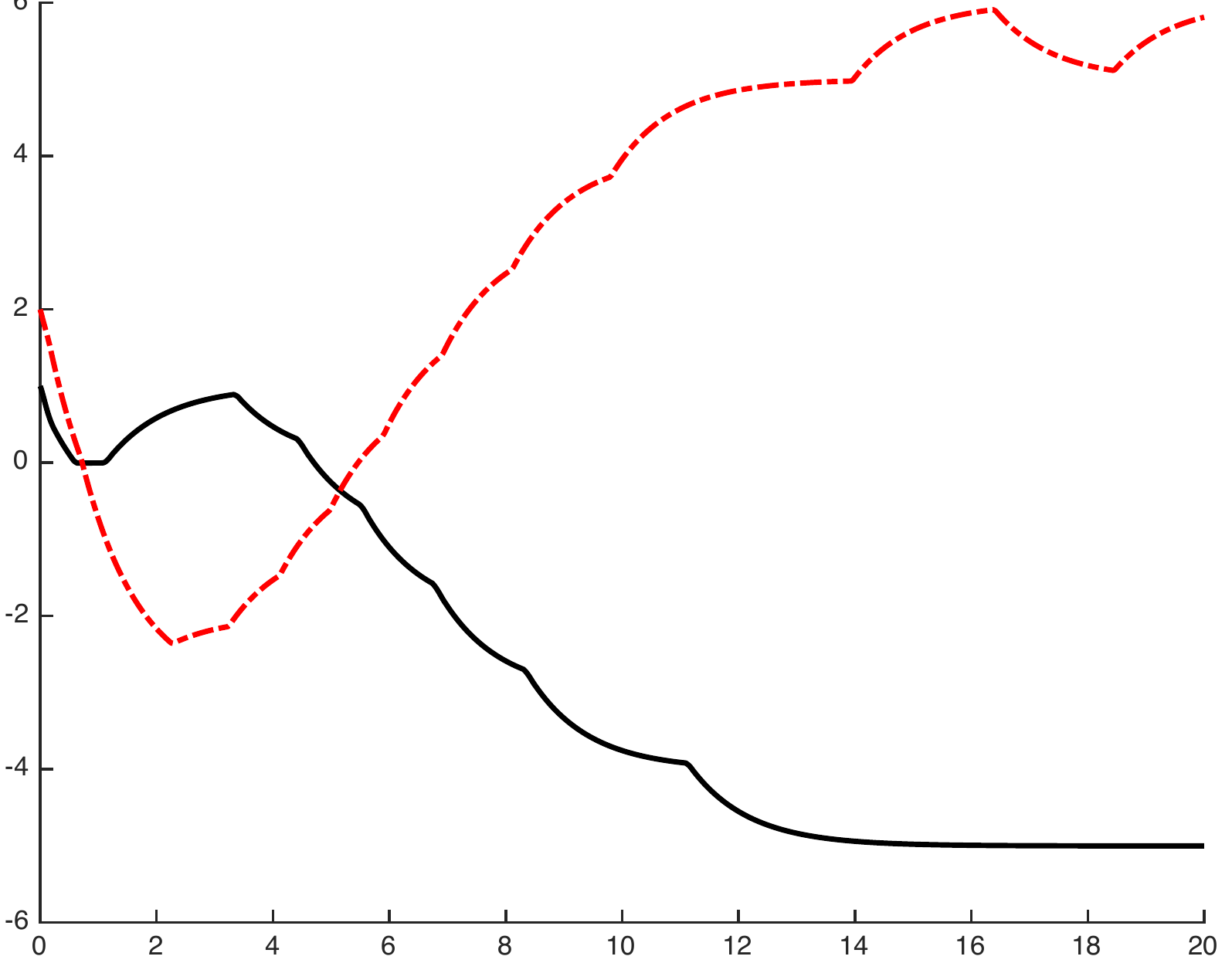}
        \caption{Black: $\eta_1$, Red: $\eta_2$}
    \end{subfigure}    
    \caption{Evolution of closed-loop trajectories with $\ell=1$.}
    \label{fig:ell05}
\end{figure*}

\begin{figure*}[t!]
    \centering
    \begin{subfigure}[t]{0.32\textwidth}
        \centering
        \includegraphics[width=\linewidth]{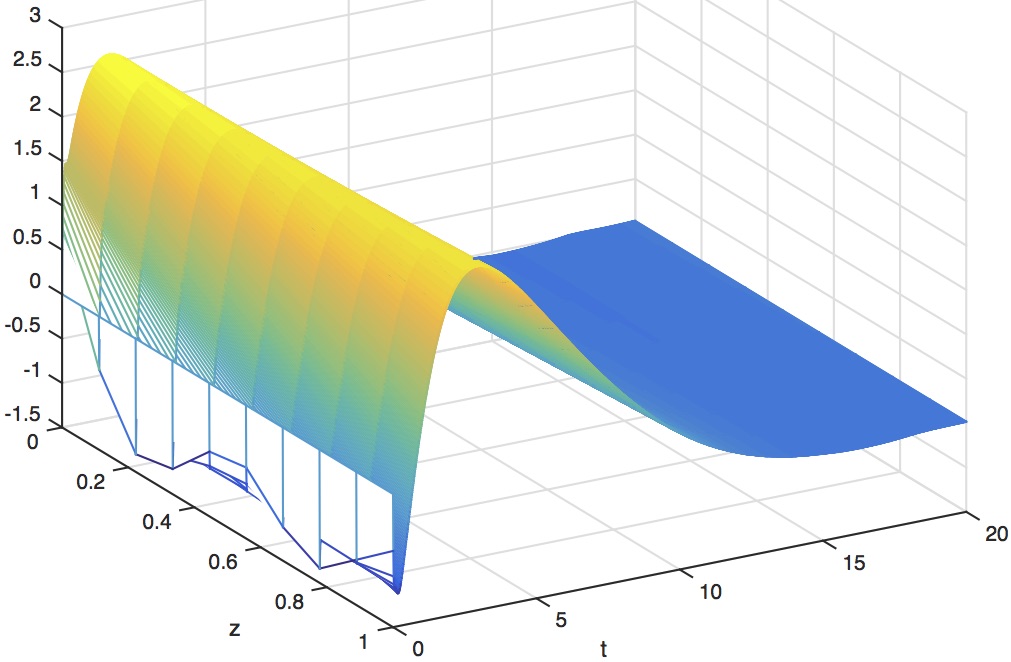}
        \caption{Evolution of $X_1$}
    \end{subfigure}%
    ~ 
    \begin{subfigure}[t]{0.32\textwidth}
        \centering
        \includegraphics[width=\linewidth]{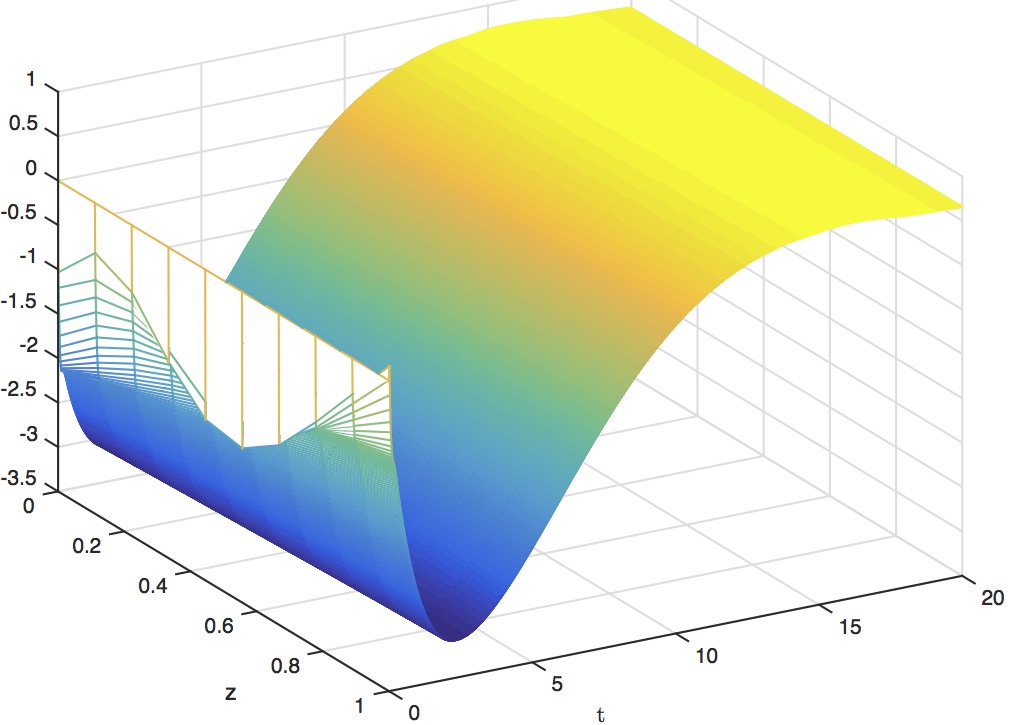}
        \caption{Evolution of $X_2$}
    \end{subfigure}
    ~
    \begin{subfigure}[t]{0.32\textwidth}
        \centering
        \includegraphics[width=\linewidth, height=4cm]{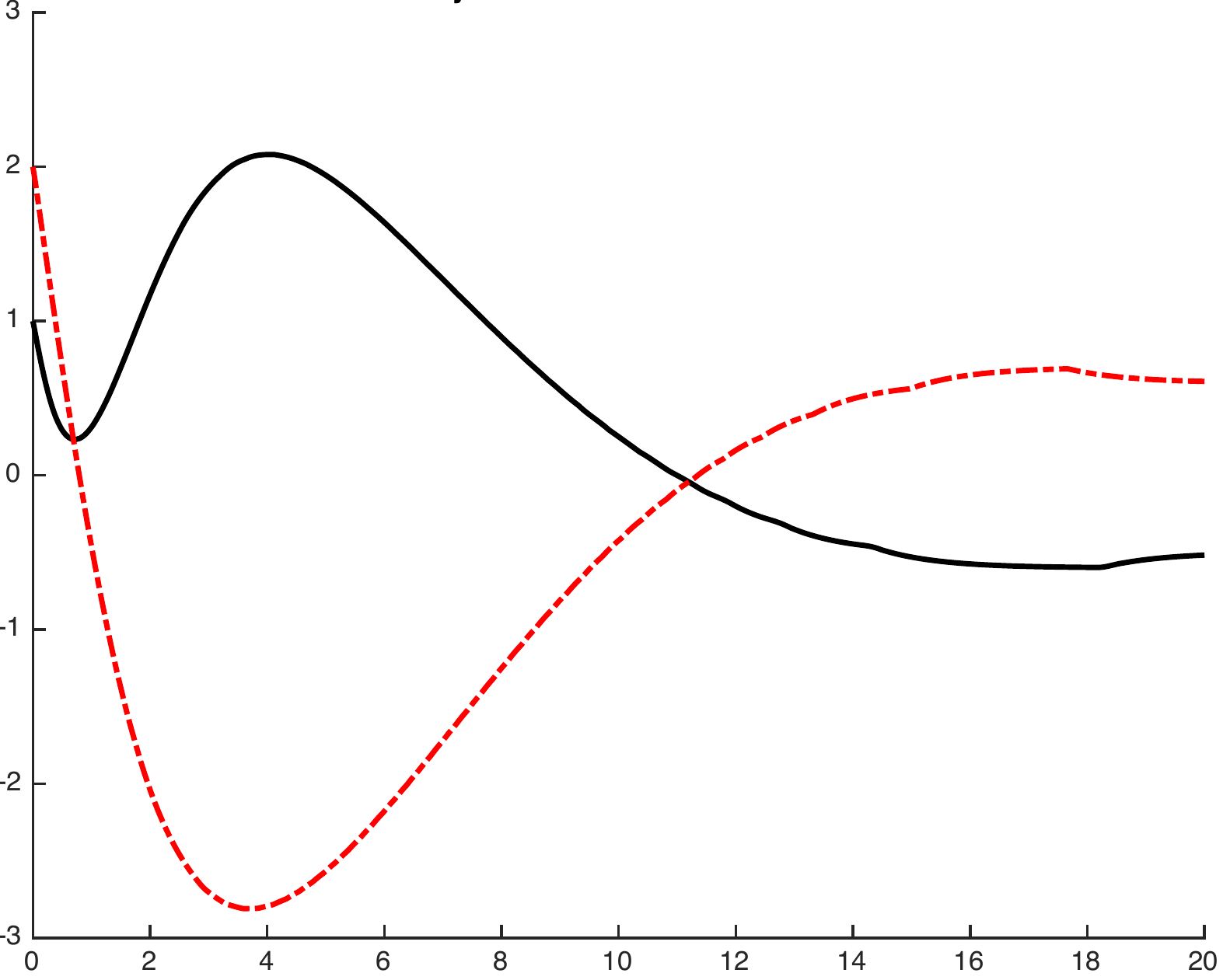}
        \caption{Black: $\eta_1$, Red: $\eta_2$}
    \end{subfigure}    
    \caption{Evolution of closed-loop trajectories with $\ell=10$.}
    \label{fig:ell5}
\end{figure*}

To illustrate the controller proposed in the previous section, the simulations for the case of a $2 \times 2$ hyperbolic system are now shown.
The system we simulate is of the form \eqref{eq:sysHyp} with
\[
\Lambda:=\begin{bmatrix} 1 & 0 \\ 0 & 2\end{bmatrix}
\]
and the boundary condition is described by 
\[
H = \begin{bmatrix} 0.25 & -1 \\ 0 & 1.25\end{bmatrix}, \quad B = \begin{bmatrix} 1 & 0 \\ 0 & 1\end{bmatrix}.
\]
Selecting the matrix $K = \begin{bmatrix} 0  & 0.5\\ -0.25& -0.5\end{bmatrix}$, it could be checked that the boundary damping condition \eqref{eq:gainCond} are satisfied, and thus the DSS estimate holds for \eqref{eq:sysHyp}-\eqref{eq:initCond} with the closed-loop boundary condition \eqref{eq:initCondLoop}.
Select the following initial condition, which satisfies the first-roder compatibility condition for the existence of solutions in $\cH^1((0,1);\R^n)$:
\[
X_1(z,0) =\cos (4\pi z)- 1 \; ,\quad X_2(z,0) =\cos (2\pi z)- 1,
\]
for $z \in [0,1]$.
 
Now to illustrate Theorem \ref{thm:quant}, let us consider the quantizer given by $q(x)= \floor{\ell x}/\ell $ with the parameter $\ell$.
The error due to quantization in this case is $\zeta_q = 1\slash \ell$, and for the sake of simplicity we take the range to be sufficiently large.

The time-evolution of the solutions for the first and second component of $X$, as well as the state of the dynamic controller $\eta$ are plotted in Figure~\ref{fig:ell005} for $\ell = 0.1$, and same entities are plotted in Figure~\ref{fig:ell05} and Figure~\ref{fig:ell5} for $\ell = 1$ and $\ell = 10$, respectively.
It could be seen that the solution to \eqref{eq:initCond} and \eqref{eq:quantCont} converges to a neighborhood of the origin as the time increases. The size of this neighborhood is seen to be decreasing as we increase the value of $\ell$, that is, the steady state values of $(X,\eta)$ are farther from the origin in Figure~\ref{fig:ell005} with $\ell=0.1$, compared to the steady state values of $(X,\eta)$ in Figure~\ref{fig:ell5} with $\ell = 10$. This is because, the upper bound on the error due to quantization of decreases as $\ell$ increases. These simulations are thus in agreement with the result reported in Theorem~\ref{thm:quant}.

%
       
\section{Conclusions}

We considered the problem of stabilization of boundary controlled linear hyperbolic PDEs in the presence of measurement errors in the output.
A notion of stability to describe robustness with respect to disturbances is introduced and a class of dynamic controllers is proposed under certain conditions which allow us to achieve this robust stability property.
We make connections of our proposed DSS notion with the conventional ISS and ISpS notions.
The results are used for an application when the output measurements are quantized over a finite alphabet set before being passed to the controller.
If the initial condition of the system is within the range of the quantizer, the resulting state trajectory is shown to converge to a ball parameterized by the quantization error.
Lower bounds on the cardinality of the alphabet set for the quantizer to achieve stability are also given.

Several interesting questions have come up in studying the problem. Firstly, we are interested in relaxing the stability condition that were presented in the statement of Theorem~\ref{thm:mainISS}.
One can also ask if adding nonlinear dynamics to the controller would lead to better results.

\section*{Acknowledgements}
The authors would like to thank Eduardo Cerpa for useful discussions related to the proof of well-posedness result, and anonymous reviewers for their numerous constructive remarks on an earlier version of this paper.

\appendix
\section{Proof of Lemma~\ref{lem:AInfGen}}

Consider the space $\cJ = \cL^2((0,1);\R^n) \times \R^n$. For some $\mu > 0$, this space is equipped with the inner product
\[
\innProd{\begin{pmatrix} \varphi \\ \eta \end{pmatrix}}{\begin{pmatrix} \psi \\ \theta \end{pmatrix}}\!_\mu.
=  \sum_{i=1}^n \int_0^1 \varphi_i \psi_i e^{\mu(z-1)} \, dz + \theta^\top \eta 
\]
The proof of this lemma builds on several intermediate steps.

{\em Step~1:} The operator $\cA$ is quasi-dissipative, that is, there exists a constant $C_\mu>0$ such that
\begin{equation}\label{eq:defQuasDiss}
\innProd{\cA \begin{pmatrix} \varphi \\ \eta \end{pmatrix}}{\begin{pmatrix} \varphi \\ \eta \end{pmatrix}}\!_\mu
\le C_\mu \innProd{\begin{pmatrix} \varphi \\ \eta \end{pmatrix}}{\begin{pmatrix} \varphi \\ \eta \end{pmatrix}}\!_\mu, \quad \forall \begin{pmatrix} \varphi \\ \eta \end{pmatrix} \in \dom(\cA).
\end{equation}
To see this, it is observed that
\begin{align*}
\innProd{\cA \begin{pmatrix} \varphi \\ \eta \end{pmatrix}}{\begin{pmatrix} \varphi \\ \eta \end{pmatrix}}\!_\mu
& = \sum_{i=1}^n \int_0^1 -\lambda_i (\varphi_i)_z \varphi_i e^{\mu(z-1)} \, dz + \eta^\top R \eta \\
& \le \frac{\mu}{2}\sum_{i=1}^n \int_0^1 \lambda_i \varphi_i^2 e^{\mu(z-1)} \, dz + \sum_{i=1}^n \lambda_i (\varphi_i^2(0)e^{-\mu} - \varphi_i^2(1))  + \|R\| \eta^\top \eta .
\end{align*}
Substituting the boundary condition $\varphi (0) = H \varphi(1) + BK \eta$, and using Young's inequality, we get
\[
\sum_{i=1}^n \varphi_i^2(0) \lambda_i e^{-\mu} \le c_1 \sum_{i=1}^n \varphi_i^2(1)\lambda_ie^{-\mu} + c_2 \sum_{i=1}^n \eta_i^2\lambda_ie^{-\mu}
\]
where $c_1 = \|H^\top H \|$ and $c_2 = \|K^\top B^\top BK\|$. Choose $\mu > 0 $ large enough such that
\[
\lambda_i c_1 e^{-\mu} \le 1, \quad \forall \, i \in \{1,\dots,n\}.
\]
Assuming that $\mu$ satisfies this condition, we thus obtain \eqref{eq:defQuasDiss} with
\[
C_\mu = \max\left\{1, \|R\|+ c_2 \lambda_{\max}(\Lambda) e^{-\mu}\right\},
\]
where $\lambda_{\max}(\Lambda)$ is the largest eigenvalue of the matrix $\Lambda$.

{\em Step~2:} The adjoint $\cA^*$ is quasi-dissipative.

By definition, $\cA^*$ is an operator that satisfies
\begin{equation}\label{eq:defAdj}
\left\langle \cA \begin{pmatrix} \varphi \\ \eta \end{pmatrix}, \begin{pmatrix} \psi \\ \theta \end{pmatrix}\right \rangle\!_\mu = \left\langle \begin{pmatrix} \varphi \\ \eta \end{pmatrix}, \cA^* \begin{pmatrix} \psi \\ \theta \end{pmatrix}\right \rangle\!_\mu \ .
\end{equation}
To compute $\cA^*$, we introduce the matrix $D \in \R^{n\times n}$
\[
D(z) := \diag \{e^{\mu(z-1)}, \dots, e^{\mu(z-1)}\}, \quad \forall \, z \in [0,1],
\]
and it is observed that
\begin{align*}
\left\langle \cA \begin{pmatrix} \varphi \\ \eta \end{pmatrix}, \begin{pmatrix} \psi \\ \theta \end{pmatrix}\right \rangle\!_\mu 
& = \int_0^1 -\varphi_z^\top \Lambda D \psi  \, dz + \theta^\top F\eta \\
& = - \left[ \varphi^\top \Lambda D \psi \right]_0^1 + \int_0^1 \varphi^\top \Lambda (D_z \psi + D\psi_z) \, dz + \eta^\top F^\top \theta \\
& \quad = \varphi(0)^\top\Lambda D(0) \psi(0) - \varphi(1)^\top\Lambda \psi(1)
 + \int_0^1 \varphi^\top \Lambda D (\mu \psi + \psi_z) dz + \eta^\top F^\top \theta\\
& \quad = \varphi(1)^\top H^\top\Lambda D(0) \psi(0) - \varphi(1)^\top\Lambda \psi(1) +\eta^\top F^\top \theta \\
& \qquad + \int_0^1 \varphi^\top D \Lambda (\mu \psi + \psi_z) dz +\eta^\top K^\top B^\top \Lambda D(0) \psi(0).
\end{align*}
Let $\cA^*$ be such that
\[
\dom(\cA^*) := \big\{\psi \in \cH^1((0,1);\R^n) \text{ such that } 
\psi(1) = \Lambda^{-1}H^\top \Lambda D(0) \psi(0)\big\}
\]
\[
\cA^*\begin{pmatrix} \psi \\ \theta \end{pmatrix} := 
\begin{pmatrix} \Lambda (\mu \psi + \psi_z) \\ K^\top B^\top \Lambda D(0) \psi(0) + F^\top \theta \end{pmatrix}.
\]
Clearly, with this definition of the adjoint operator, equation \eqref{eq:defAdj} holds. To show that $\cA^*$ is quasi-dissipative, we observe that
\begin{multline}\label{eq:formAdj}
\innProd{\cA^*\begin{pmatrix} \psi \\ \theta \end{pmatrix}}{\begin{pmatrix} \psi \\ \theta \end{pmatrix}}\! _\mu
 = \int_0^1 (\psi_z + \mu \psi)^\top \Lambda D \psi \, dz 
 + \theta^\top (K^\top B^\top \Lambda D(0) \psi(0) + F^\top \theta).
\end{multline}
Analyzing the first term on the right-hand side, we have
\begin{align}
& \quad \int_0^1 \psi_z^\top \Lambda D \psi \, dz \\
& = \frac{1}{2}\left[\psi^\top\Lambda D\psi \right]_0^1 + \frac{\mu}{2} \int_0^1 \psi^\top \Lambda D \psi \, dz \notag\\
& = \frac{1}{2} \left( \psi(1)^\top \Lambda \psi(1) - \psi(0)^\top \Lambda D(0)\psi(0) \right) + \frac{\mu}{2} \int_0^1 \psi^\top \Lambda D \psi \, dz \notag\\
& = \frac{1}{2} \psi(0)^\top \left( M_D M_H M_D - M_D \right)\psi(0) + \frac{\mu}{2} \int_0^1 \psi^\top \Lambda D \psi \, dz \label{eq:boundIntAdj}
\end{align}
where $M_D = \Lambda D(0)$ is a diagonal matrix, and $M_H = H\Lambda^{-1} H^\top$.

Next we observe that
\begin{equation}\label{eq:iniCondAdj}
\theta^\top K^\top B^\top \Lambda D(0) \psi(0) \le c_3 \theta^\top \theta + \frac{1}{2}\psi(0)^\top M_D^2\psi(0)
\end{equation}
where $c_3 = \frac{1}{2} \|BK\|^2$. Next, choose $\mu > 0$ such that,
\begin{equation}\label{eq:condMuAdj}
e^{-\mu} \lambda_{\max}(\Lambda) \,  \lambda_{\max}(M_H+I) \le 1.
\end{equation}
This condition ensures that $M_D(M_H+I) \le I$, and hence
\begin{align*}
M_DM_HM_D + M_D^2 - M_D
& = M_D (M_H + I) M_D - M_D \\
& \le M_D - M_D = 0.
\end{align*}
Substituting the expressions \eqref{eq:boundIntAdj}, \eqref{eq:iniCondAdj} in \eqref{eq:formAdj}, and using \eqref{eq:condMuAdj}, we get
\[
\innProd{\cA^*\begin{pmatrix} \psi \\ \theta \end{pmatrix}}{\begin{pmatrix} \psi \\ \theta \end{pmatrix}}\! _\mu
\le C_\mu \innProd{\begin{pmatrix} \psi \\ \theta \end{pmatrix}}{\begin{pmatrix} \psi \\ \theta \end{pmatrix}}\! _\mu
\]
where $C_\mu = \max\left\{ \frac{3}{2} \mu, c_3 + \|R\| \right\}$, and $\mu$ satisfies \eqref{eq:condMuAdj}.

{\em Step~3:} The operator $\cA$ is closed and $\dom(A)$ is dense in $\cL^2((0,1);\R^n)$.

To see that $\cA$ is closed, consider a sequence $(\varphi^k, \eta^k)$ such that
\[
\begin{pmatrix} \varphi^k \\ \eta^k\end{pmatrix} \to \begin{pmatrix}\varphi \\ \eta\end{pmatrix}
\quad \text{and} \quad 
\cA\begin{pmatrix} \varphi^k \\ \eta^k\end{pmatrix} = \begin{pmatrix} - \Lambda \varphi_z^k \\ R \eta^k \end{pmatrix} \to \begin{pmatrix}\psi \\ \theta\end{pmatrix}
\]
Since $\varphi, \psi \in \cL^2((0,1);\R^n)$, it follows that
\[
\varphi^k \to \varphi \quad \text{in } \cH^1((0,1);\R^n)
\]
and hence $\psi = -\Lambda \varphi_x$. The matrix $R$ defines a finite-dimensional linear operator, which yields $\theta = R\eta$. Also, $\cH^{1}((0,1);\R^n)$ is continuously embedded in $\cC^0([0,L];\R^n)$, and by picking $\varphi^k \in \dom(A)$, we get
\[
\begin{pmatrix} \varphi(0) \\ \eta \end{pmatrix} = \begin{bmatrix} H & BK \\ 0 & I \end{bmatrix} \begin{pmatrix} \varphi(1) \\ \eta \end{pmatrix} \ \in \dom(\cA).
\]

\section{Proof of Lemma~\ref{lem:boundedB}}

The linearity of the operator $\cB$ is obvious. To show that $\cB$ is bounded, we find a constant $C_\cB$ such that
\begin{equation}\label{eq:bndBProof}
\begin{aligned}
\left\| \cB \begin{pmatrix} \varphi \\ \eta \end{pmatrix} \right \|_{\cH^1((0,1);\R^n)\times \R^n} & = \| S \varphi(1) \|_{\R^n}  \le C_\cB \left\|\begin{pmatrix} \varphi \\ \eta \end{pmatrix} \right \|_{\cH^1((0,1);\R^n)\times \R^n} .
\end{aligned}
\end{equation}
To obtain such an inequality, we first observe that
\begin{align}
\vert \varphi(1) \vert^{2}  &= \left( \Big\vert \int_{0}^{1}s \varphi_z(s)  + \varphi(s) ds   \Big \vert \right)^{2} \notag \\
& \leq \left(  \int_{0}^{1} \vert s\varphi_z(s)\vert ds   + \int_{0}^{1} \vert \varphi(s) \vert ds    \vert \right)^{2} \notag \\
& \leq  2\left(\int_{0}^{1} \vert s\varphi_z(s)\vert ds  \right)^{2} +2 \left( \int_{0}^{1} \vert \varphi(s) \vert ds \right)^{2}\notag \\
& \leq  2 \left(\int_{0}^{1} \vert \varphi_z(s)\vert^{2} ds +   \int_{0}^{1} \vert \varphi(s) \vert^{2} ds  \right)\notag\\
&= 2\Vert \varphi \Vert^2_{\cH^{1}((0,1);\R^{n})}. \label{eq:bndPhi1}
\end{align}
%
from where the inequality in \eqref{eq:bndBProof} is obtained.

\bibliographystyle{plain}

\begin{thebibliography}{10}

\bibitem{BastCoro16}
G.~Bastin and J.-M. Coron.
\newblock {\em Stability and Boundary Stabilization of 1-D Hyperbolic Systems}.
\newblock Subseries in Control: Progress in Nonlinear Differential Equations
  and Their Applications. Birkhauser, 2016.

\bibitem{BastinCoronAndreaIFAC08}
G.~Bastin, J.-M. Coron, and B.~{d'Andr\'ea-Novel}.
\newblock Using hyperbolic systems of balance laws for modeling, control and
  stability analysis of physical networks.
\newblock In {\em Lecture notes for the Pre-Congress Workshop on Complex
  Embedded and Networked Control Systems}, Seoul, Korea, 2008. 17th IFAC World
  Congress.

\bibitem{CoronBastin15}
J.-M. Coron and G.~Bastin.
\newblock Dissipative boundary conditions for one-dimensional quasi-linear
  hyperbolic systems: {L}yapunov stability for the $\mathcal{C}^1$-norm.
\newblock {\em SIAM Journal on Control and Optimization}, 53(3):1464--1483,
  2015.

\bibitem{CoronBastin08}
J.-M. Coron, G.~Bastin, and B.~{d'Andr\'ea Novel}.
\newblock Dissipative boundary conditions for one-dimensional nonlinear
  hyperbolic systems.
\newblock {\em SIAM Journal on Control and Optimization}, 47(3):1460--1498,
  2008.

\bibitem{CurtZwar95}
R.F. Curtain and H.~Zwart.
\newblock {\em An Introduction to Infinite-Dimensional Linear Systems Theory},
  volume~21 of {\em Texts in Applied Mathematics}.
\newblock 1995.

\bibitem{dashkovskiy2013input}
S.~Dashkovskiy and A.~Mironchenko.
\newblock Input-to-state stability of infinite-dimensional control systems.
\newblock {\em Mathematics of Control, Signals, and Systems}, 25(1):1--35,
  2013.

\bibitem{dashkovskiy2007iss}
S.~Dashkovskiy, B.~R{\"u}ffer, and F.~Wirth.
\newblock An {ISS} small gain theorem for general networks.
\newblock {\em Mathematics of Control, Signals, and Systems}, 19(2):93--122,
  2007.

\bibitem{EspiGira16}
N.~Espitia, A.~Girard, N.~Marchand, and C.~Prieur.
\newblock Event-based control of linear hyperbolic systems of conservation
  laws.
\newblock {\em Automatica}, 70:275--287, 2016.

\bibitem{geiselhart2012numerical}
R.~Geiselhart and F.~Wirth.
\newblock Numerical construction of {LISS} {L}yapunov functions under a
  small-gain condition.
\newblock {\em Mathematics of Control, Signals, and Systems}, 24(1-2):3--32,
  2012.

\bibitem{ito2012capability}
H.~Ito, S.~Dashkovskiy, and F.~Wirth.
\newblock Capability and limitation of max-and sum-type construction of
  {L}yapunov functions for networks of {iISS} systems.
\newblock {\em Automatica}, 48(6):1197--1204, 2012.

\bibitem{JianTeel94}
Z.-P. Jiang, A.R. Teel, and L.~Praly.
\newblock Small-gain theorem for {ISS} systems and applications.
\newblock {\em Math. Control Signals Systems}, 7(2):95--120, 1994.

\bibitem{KaraKrst16}
I.~Karafyllis and M.~Krstic.
\newblock {ISS} with respect to boundary disturbances for 1-D parabolic PDEs.
\newblock {\em {IEEE} Trans. Autom. Control}, 61(12):3712--3724, 2016.

\bibitem{KaraKrst17}
I.~Karafyllis and M.~Krstic.
\newblock {ISS} in different norms for {1-D} parabolic {PDEs} with boundary
  disturbances.
\newblock 2017.
\newblock To appear.

\bibitem{KrstSmys08}
M.~Krstic and A.~Smyshlyaev.
\newblock {\em Boundary Control of {PDEs}: {A} Course on Backstepping Designs}.
\newblock Advances in design and control. Society for Industrial and Applied
  Mathematics, Philadelphia, PA, 2008.

\bibitem{Libe03Aut}
D.~Liberzon.
\newblock Hybrid feedback stabilization of of systems with quantized signals.
\newblock {\em Automatica}, 39(9):1543--1554, 2003.

\bibitem{Libe06}
D.~Liberzon.
\newblock Quantization, time delays, and nonlinear stabilization.
\newblock {\em IEEE Trans. Autom. Control}, 51(7):1190--1195, 2006.

\bibitem{mazenc2011strict}
F.~Mazenc and C.~Prieur.
\newblock Strict {L}yapunov functions for semilinear parabolic partial
  differential equations.
\newblock {\em Mathematical Control and Related Fields}, 1(2):231--250, 2011.

\bibitem{mironchenko2015construction}
A.~Mironchenko and H.~Ito.
\newblock Construction of {L}yapunov functions for interconnected parabolic
  systems: {An} {iISS} approach.
\newblock {\em SIAM Journal on Control and Optimization}, 53(6):3364--3382,
  2015.

\bibitem{mironchenko2015note}
A.~Mironchenko and F.~Wirth.
\newblock A note on input-to-state stability of linear and bilinear
  infinite-dimensional systems.
\newblock In {\em 54th Conference on Decision and Control}, pages 495--500,
  2015.

\bibitem{NairFagn07}
G.~Nair, F.~Fagnani, S.~Zampieri, and R.J. Evans.
\newblock Feedback control under data rate constraints: {A}n overview.
\newblock {\em Proceedings of the {IEEE}}, 95(1):108 -- 137, 2007.

\bibitem{PrieurMazenc:hyper:11}
C.~Prieur and F.~Mazenc.
\newblock {ISS}-{L}yapunov functions for time-varying hyperbolic systems of
  balance laws.
\newblock {\em Mathematics of Control, Signals, and Systems}, 24(1):111--134,
  2012.
  
\bibitem{PrieTanw17}
C.~Prieur and A.~Tanwani.
\newblock Asymptotic stabilization of some finite and infinite dimensional systems by means of dynamic event-triggered output feedbacks.
\newblock Chapter 8 in {\em Feedback Stabilization of Controlled Dynamical Systems}. Lecture Notes in Control and Information Sciences, Volume 473, pages 201 -- 230, 2017.

\bibitem{Sontag89}
E.~D. Sontag.
\newblock Smooth stabilization implies coprime factorization.
\newblock {\em {IEEE} Trans. Autom. Control}, 34(4):435 -- 443, 1989.

\bibitem{TanwPrie16b}
A.~Tanwani, C.~Prieur, and M.~Fiacchini.
\newblock Observer-based feedback stabilization of linear systems with
  event-triggered sampling and dynamic quantization.
\newblock {\em Systems \& Control Letters}, 94:46 -- 56, 2016.


\bibitem{TanwPrie16}
A.~Tanwani, C.~Prieur, and S.~Tarbouriech.
\newblock Input-to-state stabilization in $\mathcal{H}^1$-norm for boundary
  controlled linear hyperbolic {PDEs} with application to quantized control.
\newblock In {\em Proc. 55th~{IEEE} Conf. Decision \& Control}, pages
  3112--3117, 2016.

\end{thebibliography}

\end{document}